\newcommand{\fsp}{\cc{FS_2P}}
\newcommand{\fsigp}{\cc{F\Sigma_2P}}
\title{Symmetric Exponential Time Requires Near-Maximum Circuit Size: Simplified, Truly Uniform}
\author{Zeyong Li\thanks{National University of Singapore. Email: li.zeyong@u.nus.edu}}
\date{}
\begin{document}

\maketitle

\begin{abstract}
    In a recent breakthrough, Chen, Hirahara and Ren \cite{CHR23} prove that $\ensuremath{\mathsf{S_2E}}/_1 \not\subset \ensuremath{\mathsf{SIZE}}[2^n/n]$ by giving a single-valued $\ensuremath{\mathsf{FS_2P}}$ algorithm for the Range Avoidance Problem ($\ensuremath{\mathsf{Avoid}}$) that works for infinitely many input size $n$.

    Building on their work, we present a simple single-valued $\ensuremath{\mathsf{FS_2P}}$ algorithm for $\ensuremath{\mathsf{Avoid}}$ that works for all input size $n$. As a result, we obtain the circuit lower bound $\ensuremath{\mathsf{S_2E}} \not\subset \ensuremath{i.o.}$-$\ensuremath{\mathsf{SIZE}}[2^n/n]$ and many other corollaries:
    \begin{enumerate}
        \item Almost-everywhere near-maximum circuit lower bound for $\ensuremath{{\mathsf{\Sigma_2E}}} \cap \ensuremath{{\mathsf{\Pi_2E}}}$ and $\ensuremath{\mathsf{ZPE}}^{\mathsf{NP}}$.
        \item Pseudodeterministic $\ensuremath{\mathsf{FZPP}}^{\mathsf{NP}}$ constructions for combinatorial objects such as: Ramsey graphs, rigid matrices, pseudorandom generators, two-source extractors, linear codes, hard truth tables, and $\ensuremath{K^{\mathsf{poly}}}$-random strings.
    \end{enumerate}
\end{abstract}

\section{Introduction}
Proving circuit lower bounds has been one of the most fundamental problems in complexity theory, and has close connections to many other fundamental questions such as $\cc{P}$ versus $\NP$, derandomization and so on. For instance, if we could show $\cc{E} \not\subset \cc{SIZE}[2^{o(n)}]$, then we achieve unconditional derandomization, i.e. $\cc{prBPP} = \cc{prP}$ \cite{NW94,IW97}. Morally speaking, it is a quest to distinguish the computational power between uniform and non-uniform computations. 

In the search of exponential circuit lower bound, we know that almost all $n$-bit boolean functions requires near-maximum $(2^n/n)$-sized circuit via a simple counting argument \cite{Shannon, FM05}. While such argument is inherently non-constructive, it serves as some form of evidence that we should be optimistic about finding one such boolean function in some not-too-large complexity class. 

However, limited progress has been made over the past few decades in the search of a small complexity class with exponential circuit lower bound. In 1982, Kannan \cite{Kannan82} showed that $\cc{\Sigma_3E} \cap \cc{\Pi_3E}$ contains a language with maximum circuit complexity. The frontier was later pushed to $\cc{\Delta_3E} = \cc{E}^{\cc{\Sigma_2P}}$ by Miltersen, Vinodchandran and Watanabe \cite{MVW99}, which persisted to be the state of the art for more than twenty years.

Very recently in a breakthrough result, Chen, Hirahara and Ren \cite{CHR23} prove that: 
\[ \cc{S_2E}/_1 \not\subset \cc{SIZE}[2^n/n] \; . \]

That is, the symmetric time class $\cc{S_2E}$ with one bit of advice requires near-maximum circuit complexity. 
Despite requiring one bit of advice, this is a huge improvement compared to the previous result since $\cc{S_2E} \subseteq \cc{ZPE}^{\NP} \subseteq \cc{\Sigma_2E}$ is expected to be a much smaller class compared to $\cc{\Delta_3E} = \cc{E}^{\cc{\Sigma_2P}}$, assuming that the exponential hierarchy does not collapse. And it turns out that this exciting result is indeed a corollary of a \emph{single-valued} algorithm for the Range Avoidance problem ($\cc{Avoid}$). 

\paragraph{The Range Avoidance Problem}
In the past few years, study on the range avoidance problem has been another exciting line of research \cite{KKMP21,Korten,GLW22,RSW22,CHLR23,GGNS23,ILW23}. The problem itself is defined as follows: given an expanding circuit $C:\bits^n \rightarrow \bits^{n+1}$, find a string not in the image of $C$. That is, output $y \in \bits^{n+1}$ where $\forall x \in \bits^n, C(x) \neq y$.

At a first glance, $\cc{Avoid}$ might seem to be a easy problem: a random string $y$ would be a non-image of $C$ with probability at least $1/2$. However, it is unclear how to amplify the probability of success without an $\NP$ oracle. And given an $\NP$ oracle, $\cc{Avoid}$ can be trivially solved in $\cc{FZPP}^\NP$. Somewhat surprisingly, Korten \cite{Korten} showed that $\cc{Avoid}$ is as hard as finding optimal explicit constructions of important combinatorial objects such as Ramsey graphs \cite{Radziszowski2021}, rigid matrices \cite{GLW22,GGNS23}, pseudorandom generators \cite{CT21}, two-source extractors \cite{CZ19,Li23}, linear codes \cite{GLW22}, hard truth tables \cite{Korten}, and strings with maximum time-bounded Kolmogorov complexity ($K^{\poly}$-random strings) \cite{RSW22}. Therefore, finding any non-trivial algorithm for $\cc{Avoid}$ implies algorithms for constructing these important objects.

\paragraph{Single-Valued Algorithm} Let $\Pi$ be a search problem where $\Pi_x$ denotes the set of solutions for input $x$. Morally speaking, A single-valued algorithm $A$ on input $x$ succeeds only when it outputs some canonical solution $y_x \in \Pi_x$. Here are two examples:
\begin{itemize}
    \item A single-valued $\cc{FNP}$ algorithm should have at least one successful computational path and should output $\bot$ in all other computational paths. (Studied as $\cc{NPSV}$ constructions in, e.g. \cite{HNOS96}).
    \item A single-valued $\cc{FZPP}$ algorithm should succeed on most (e.g. $\geq 2/3$ fraction) computational paths and output $\bot$ otherwise. (Studied as pseudodeterministic constructions in, e.g. \cite{GG11}).
\end{itemize}
In particular, the trivial $\cc{FZPP}^{\NP}$ algorithm for $\cc{Avoid}$ (i.e. sample a string and check with the $\NP$ oracle) is inherently not single-valued: the outputs would be different in almost all executions!

As pointed out in \cite{CHR23} and many other previous works, circuit lower bounds can be viewed as single-valued construction of hard truth tables. In particular, if for all input size $n$, one could compute (consistently the same) truth table that is hard against all $s(n)$-size circuits, then the language whose characteristic function is set to the truth table, would be a hard language $\notin \cc{SIZE}[s(n)]$. Given that finding hard truth table reduces to $\cc{Avoid}$, this connects the two tasks: proving circuit lower bound and finding single-valued algorithm for $\cc{Avoid}$.

\subsection{Our Results}

\subsubsection{Algorithm for the Range Avoidance Problem}

\cite{CHR23} presented a single-valued $\fsp$ algorithm for $\cc{Avoid}$ that works \emph{infinitely often}. I.e., for infinitely many (but unknown) input size $n$, there is a $\cc{S_2P}$ machine\footnote{To provide some intuition for readers who are unfamiliar with this class, $\cc{S_2P}$ is contained in $\cc{ZPP}^{\NP}$\cite{Cai07}.}, on input a circuit $C:\bits^n \rightarrow \bits^{n+1}$, outputs a canonical non-image of $C$ in the successful computational paths and $\bot$ otherwise.

In this work, we extend their algorithm to a single-valued $\fsp$ algorithm for $\cc{Avoid}$ that works \emph{for all input size $n$}.


\begin{theorem}
    There is a single-valued $\fsp$ algorithm $A$: when given any circuit $C:\bits^n \rightarrow \bits^{n+1}$ as input, $A(C)$ outputs $y_C$ such that $y_C \notin \Im(C)$ .
\end{theorem}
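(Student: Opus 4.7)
The starting point is to view the CHR23 algorithm as a win--win parameterized by a guessed hardness threshold $s$: either some truth table of size $s$ is hard against circuits of a related size (and then a Nisan--Wigderson-style PRG built from it produces a string outside $\Im(C)$), or no such truth table exists (which can be exploited to construct a non-image of $C$ directly). The infinitely-often restriction in \cite{CHR23} arises because their analysis requires $s$ to sit in a narrow ``sweet spot'' that may fail to exist at some input lengths~$n$; morally, this is exactly the role played by their one bit of advice. The plan is to remove this dependence by making the choice of $s$ canonical through a sweep, and to use the $\fsp$ protocol to force both Merlins to agree on the same choice.

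First, I would recast the CHR23 construction as a family $\{\mathcal{A}_s\}$ of subroutines, each returning either a candidate string $y_s \in \bits^{n+1}$ or $\bot$, with two properties: (i) for every fixed $s$, the subroutine $\mathcal{A}_s$ can be implemented as a single-valued $\fsp$ procedure; and (ii) whenever $\mathcal{A}_s$ returns a non-$\bot$ output, that output is always a genuine non-image of $C$, regardless of whether $s$ was the ``right'' threshold for the input length. Property (ii) decouples correctness from the parameter choice, and is the key conceptual move: it lets the algorithm try many values of $s$ and simply keep any success.

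Next, I would define the full algorithm as follows. Both Merlins propose a pair $(s^\star, y^\star)$, where $s^\star$ is the smallest $s$ in an enumeration $\{1, 2, \ldots, s_{\max}(n)\}$ such that $\mathcal{A}_{s^\star}$ produces a verifiable output $y^\star$, and $y^\star$ is the corresponding output. The symmetric verifier checks (a) that $y^\star \notin \Im(C)$, using the co-$\NP$-style side of the symmetric protocol, and (b) that the two proposals agree. Single-valuedness then follows because honest Merlins must converge on the lexicographically minimal succeeding pair; correctness follows immediately from property (ii).

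The main obstacle is guaranteeing that for \emph{every} input length~$n$ at least one $s \le s_{\max}(n)$ yields a success, so that $s^\star$ is well defined. Here I would exploit that at the upper end of the sweep the win--win closes unconditionally: once $s$ is large enough that strong hardness is no longer needed, a direct combinatorial argument---in the spirit of Korten's $\cc{Avoid}$ reduction implemented inside the symmetric protocol---provides a fallback subroutine $\mathcal{A}_{s_{\max}}$ that always succeeds. Verifying that this fallback truly fits within the single-valued $\fsp$ framework, and that property (ii) is preserved uniformly across the whole sweep so that no intermediate $s$ can produce a spurious output, is the principal technical step of the proof.
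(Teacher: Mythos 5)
There are two genuine gaps here, and together they mean the proposal defers essentially all of the actual proof.

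First, the entire technical content of the theorem is hidden inside your final sentence. You acknowledge that the sweep must terminate with a fallback $\mathcal{A}_{s_{\max}}$ that ``always succeeds'' via ``Korten's $\cc{Avoid}$ reduction implemented inside the symmetric protocol,'' and you flag verifying this as ``the principal technical step.'' But that step \emph{is} the theorem: the obstacle is that Korten's reduction, run with the trivially-hard truth table $f$ (the concatenation of all $2n$-bit strings, so $T = 2n\cdot 2^{2n}$), has a computational history of exponential length, and a single-valued $\fsp$ verifier needs a $\poly(n)$-size canonical witness for it. The paper's solution is to change Korten's traversal to a post-order traversal, which makes the partially-assigned GGM tree decompose into $O(n)$ fully-assigned subtrees; the roots of these subtrees form a unique $\poly(n)$-bit description $D_h$ from which any node's value can be recomputed locally, and correctness of $D_h$ is checkable by a universal quantifier over $O(\log T + n)$ bits. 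Your proposal contains no mechanism for compressing the history, and without one the fallback subroutine does not fit in $\fsp$. (Separately, the sweep over a hardness threshold $s$ is unnecessary once the fallback works unconditionally --- the paper simply runs the ``worst case'' directly and never needs a win--win.)

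Second, your verifier design is incompatible with the definition of a single-valued $\fsp$ algorithm. You propose that the verifier ``checks that the two proposals agree.'' But the $\fsp$ requirement is that there exists a $\pi_1$ such that for \emph{every} $\pi_2$ the verifier outputs $y_C$ (and symmetrically) --- outputting $\bot$ when the adversarial Merlin disagrees is permitted for $\fsigp$ but not for $\fsp$. So when one proposal is honest and the other is an arbitrary malicious string, the verifier must still \emph{identify and output} the canonical answer. This is exactly why the paper needs a selector: given two candidate descriptions $\pi_1,\pi_2$ with the promise that one equals $D_h$, it finds a vertex where the two implied histories differ and walks down the tree to a leaf (where $f$ is known) or to a point where lexicographic minimality of a preimage decides, thereby determining which candidate is correct in $\poly(n)$ time. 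An agreement check cannot substitute for this, and your proposal offers no alternative selection mechanism.
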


The remaining results are corollaries of our new single-valued $\fsp$ algorithm for $\cc{Avoid}$.

\subsubsection{Almost-everywhere near-maximum circuit lower bounds}
In \cite{CHR23}, their circuit lower bound requires one bit of advice. This is because their $\fsp$ algorithm for $\cc{Avoid}$ only works for infinitely many input size $n$, and the one bit of advice is necessary for indicating which input size is `good'. As our algorithm works for all input size $n$, the circuit lower bound that we obtain are almost-everywhere and completely removes any non-uniform advice:

\begin{theorem}
    $\cc{S_2E} \not\subset i.o.$-$\cc{SIZE}[2^n/n]$. Moreover, this holds in every relativized world.
\end{theorem}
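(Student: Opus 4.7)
The plan is to apply the single-valued $\fsp$ algorithm $A$ for $\cc{Avoid}$ from the preceding theorem to a uniformly constructed family of hard-truth-table instances, one for each input length, and to take the resulting language as the witness to the lower bound.

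Fix $n$ and let $s(n) := 2^n/n$. I would first build the standard circuit-enumeration map: a circuit $C_n : \bits^{2^n-1} \to \bits^{2^n}$ that, on input a description of some size-$s(n)$ circuit $D$ on $n$ variables (padded to exactly $2^n-1$ bits), outputs the $2^n$-bit truth table of $D$. A tight Lupanov-style encoding fits any such $D$ into $s(n) \log s(n) + O(s(n)) = 2^n - \Theta(2^n \log n / n) < 2^n - 1$ bits for all large $n$, so $C_n$ is genuinely length-expanding; it is also uniformly constructible in time $\poly(2^n)$. Every $y \notin \Im(C_n)$ is then the truth table of a Boolean function on $n$ variables that is hard against $\cc{SIZE}[s(n)]$.

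Next I would set $y_n^\star := A(C_n)$; by the preceding theorem this is a single-valued, canonically defined element of $\bits^{2^n} \setminus \Im(C_n)$. Let $L$ be the language whose length-$n$ slice is represented by $y_n^\star$, i.e., $x \in L$ iff the $x$-th bit of $y_{|x|}^\star$ equals $1$ (under any fixed bijection between $\bits^n$ and $\{0,\dots,2^n-1\}$). For \emph{every} sufficiently large $n$, the length-$n$ characteristic function of $L$ has no circuit of size $2^n/n$, which is precisely the almost-everywhere (non-$i.o.$) lower bound sought. To place $L$ in $\cc{S_2E}$, on input $x$ with $n := |x|$ a symmetric machine writes down $C_n$ deterministically in time $2^{O(n)}$, simulates $A(C_n)$ via its $\cc{S_2P}$ two-prover protocol in time $\poly(|C_n|) = 2^{O(n)}$, and outputs the $x$-th bit of the single-valued answer $y_n^\star$; both single-valuedness and the $\cc{S_2}$ winning condition are inherited verbatim from $A$.

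For the relativization clause, the construction of $C_n$ is purely combinatorial and $A$ is a $\cc{S_2P}$-style algorithm, both of which relativize: allowing $\mathcal{O}$-oracle gates in the circuits that $C_n$ enumerates and granting $A$ access to the same $\mathcal{O}$ preserves correctness and single-valuedness, so the identical argument yields $\cc{S_2E}^{\mathcal O} \not\subset i.o.$-$\cc{SIZE}^{\mathcal O}[2^n/n]$. The main foreseeable obstacle is purely bookkeeping: squeezing the circuit encoding tightly enough that size-$2^n/n$ descriptions fit strictly inside $2^n - 1$ bits (as opposed to needing $s(n) = 2^n/(cn)$ for some constant $c$), which is handled by a careful Lupanov/Shannon count. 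All substantive work has been packaged inside the previous theorem.
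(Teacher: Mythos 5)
Your proposal is correct and follows essentially the same route as the paper: apply the single-valued $\fsp$ algorithm for $\cc{Avoid}$ to a uniformly constructible truth-table-generator circuit with $s = 2^n/n$, define the language from the canonical non-image, and inherit membership in $\cc{S_2E}$ and relativization directly from the algorithm. The only cosmetic differences are that the paper uses the stack-program encoding $\cc{TT}_{n,s}$ of \cite{FM05} with input length $L_{n,s} = (s+1)(7+\log(n+s))$ rather than a Lupanov count padded to $2^n-1$ bits.
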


Similar to \cite{CHR23}, our results fully relativize.

Via known results where $\cc{S_2E} \subseteq \cc{ZPE}^{\NP}$ \cite{Cai07} and $\cc{ZPE}^{\NP} \subseteq \cc{\Sigma_2E} \cap \cc{\Pi_2E}$, we obtain the following corollaries:

\begin{corollary}
    $\cc{ZPE}^{\NP} \not\subset i.o.$-$\cc{SIZE}[2^n/n]$. Moreover, this holds in every relativized world.
\end{corollary}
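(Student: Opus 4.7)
My plan is to derive this corollary as a one-line containment chase from the preceding theorem. The key fact to invoke is Cai's inclusion $\cc{S_2P} \subseteq \cc{ZPP}^{\NP}$ from \cite{Cai07}, which upgrades via standard padding to $\cc{S_2E} \subseteq \cc{ZPE}^{\NP}$. Let $L \in \cc{S_2E}$ be the language produced by the previous theorem, so that $L \notin i.o.$-$\cc{SIZE}[2^n/n]$. By the Cai inclusion, the same $L$ lies in $\cc{ZPE}^{\NP}$, and this already witnesses $\cc{ZPE}^{\NP} \not\subset i.o.$-$\cc{SIZE}[2^n/n]$.

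For the ``every relativized world'' clause I would check that both ingredients survive relativization. The previous theorem is explicitly stated to fully relativize, so for each oracle $O$ we obtain some $L^O \in \cc{S_2E}^O \setminus i.o.$-$\cc{SIZE}^O[2^n/n]$. Cai's simulation of $\cc{S_2P}$ by $\cc{ZPP}^{\NP}$ is a black-box argument (a minimax/Chernoff sampling argument on the $\cc{S_2P}$ game matrix) that never inspects the oracle; hence it relativizes to $\cc{S_2E}^O \subseteq \cc{ZPE}^{\NP^O}$ for every $O$. Composing the two relativized statements gives the corollary in every relativized world.

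The argument has essentially no technical obstacle; the only thing worth flagging is making sure to invoke the stronger inclusion $\cc{S_2E} \subseteq \cc{ZPE}^{\NP}$ rather than the weaker $\cc{S_2E} \subseteq \cc{\Sigma_2E} \cap \cc{\Pi_2E}$ mentioned in the same sentence of the paper, since the latter would only yield the $\cc{\Sigma_2E} \cap \cc{\Pi_2E}$ lower bound and not the $\cc{ZPE}^{\NP}$ corollary claimed here.
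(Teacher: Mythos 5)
Your proof is correct and matches the paper's route exactly: the paper also obtains this corollary by combining the relativizing $\cc{S_2E}$ lower bound with Cai's inclusion $\cc{S_2E} \subseteq \cc{ZPE}^{\NP}$, whose proof relativizes. Your flag about using the $\cc{ZPE}^{\NP}$ inclusion rather than the weaker $\cc{\Sigma_2E} \cap \cc{\Pi_2E}$ containment is the right point of care, and nothing further is needed.
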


\begin{corollary}
    $\cc{\Sigma_2E} \cap \cc{\Pi_2E} \not\subset i.o.$-$\cc{SIZE}[2^n/n]$. Moreover, this holds in every relativized world.
\end{corollary}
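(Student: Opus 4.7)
The plan is to derive this corollary in one line from the preceding theorem, exploiting the textbook inclusion $\cc{S_2E} \subseteq \cc{\Sigma_2E} \cap \cc{\Pi_2E}$. To see why $\cc{S_2P} \subseteq \cc{\Sigma_2P} \cap \cc{\Pi_2P}$, I would unfold the two-prover definition: an $\cc{S_2P}$-machine for $L$ comes with a polynomial-time predicate $V(x,y,z)$ such that for $x \in L$ some ``yes-prover'' $y^{*}$ makes $V(x,y^{*},z) = 1$ for every $z$, while for $x \notin L$ some ``no-prover'' $z^{*}$ makes $V(x,y,z^{*}) = 0$ for every $y$. The yes-clause is literally a $\Sigma_2$-predicate for $L$; the no-clause rules out false positives (any candidate $y$ is killed by $z^{*}$), so the $\Sigma_2$-predicate captures $L$ exactly. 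Symmetrically, negating the no-clause gives a $\Pi_2$-predicate which the yes-clause shows is correct on yes-instances. Replacing ``polynomial-time'' by ``$2^{O(n)}$-time'' throughout scales the argument up to $\cc{S_2E} \subseteq \cc{\Sigma_2E} \cap \cc{\Pi_2E}$.

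Given this inclusion, I would simply take a language $L \in \cc{S_2E}$ witnessing $\cc{S_2E} \not\subset i.o.$-$\cc{SIZE}[2^n/n]$, which exists by the preceding theorem; the same $L$ now lives in $\cc{\Sigma_2E} \cap \cc{\Pi_2E}$, which is exactly the corollary. No padding or extra construction is needed.

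For the ``every relativized world'' clause I would note that the quantifier rearrangement above is purely syntactic and survives attaching an arbitrary oracle $\mathcal{O}$ to the verifier: the relativized verifier $V^{\mathcal{O}}$ witnesses $\cc{S_2E}^{\mathcal{O}} \subseteq \cc{\Sigma_2E}^{\mathcal{O}} \cap \cc{\Pi_2E}^{\mathcal{O}}$, and the preceding theorem already relativizes. There is no real obstacle to anticipate; the only thing to be careful about is writing $\cc{S_2P}$ in its two-prover form so that both the $\Sigma_2$ and the $\Pi_2$ characterizations fall out transparently from the same verifier.
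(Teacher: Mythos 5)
Your proposal is correct: the lower bound for $\cc{S_2E}$ transfers verbatim to any class containing it, and your quantifier-unfolding argument for $\cc{S_2E} \subseteq \cc{\Sigma_2E} \cap \cc{\Pi_2E}$ is sound (for $x \notin L$ the no-prover $z^{*}$ falsifies $\exists y \forall z\, V=1$, so the $\Sigma_2$ predicate has no false positives, and the $\Pi_2$ direction is symmetric), and it indeed relativizes syntactically. The only difference from the paper is the justification of the containment: the paper routes both of its corollaries through the chain $\cc{S_2E} \subseteq \cc{ZPE}^{\NP} \subseteq \cc{\Sigma_2E} \cap \cc{\Pi_2E}$, where the first inclusion is Cai's theorem, whereas you prove the containment directly from the two-prover definition of $\cc{S_2TIME}$. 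Your route is more self-contained for this particular corollary (it avoids invoking Cai's theorem, which is a genuinely nontrivial result), while the paper's chain has the advantage of yielding the $\cc{ZPE}^{\NP}$ corollary as an intermediate step along the way. Either way the conclusion and its relativized form follow immediately.
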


\subsubsection{Explicit constructions}
Next, by Cai's theorem \cite{Cai07}, we have $\cc{S_2P} \subseteq \cc{ZPP}^\NP$. Hence, $\cc{Avoid}$ and all explicit construction problems admit a single-valued $\cc{FZPP}^{\NP}$ algorithm. Equivalently speaking, $\cc{Avoid}$ and all explicit construction problems admit a pseudodeterministic (with an $\NP$ oracle) algorithm, where a pseudodeterministic algorithm for a search problem is a probabilistic algorithm that with high probability outputs a fixed solution on any given input.

\begin{theorem}
    There is a single-valued $\cc{FZPP}^{\NP}$ algorithm $A$: when given any circuit $C:\bits^n \rightarrow \bits^{n+1}$ as input, $A(C)$ outputs $y_C$ with probability at least $2/3$ and $y_C \notin \Im(C)$.
\end{theorem}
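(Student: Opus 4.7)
The plan is to lift the preceding single-valued $\fsp$ algorithm for $\cc{Avoid}$ to a pseudodeterministic $\cc{FZPP}^{\NP}$ algorithm, bit by bit, using Cai's theorem $\cc{S_2P} \subseteq \cc{ZPP}^{\NP}$ \cite{Cai07}. Let $A$ denote that single-valued $\fsp$ algorithm, with underlying symmetric verifier $V(C, y, z)$ which outputs either the canonical non-image $y_C$ or $\bot$. For each bit index $i \in \{1, \dots, n+1\}$ I would introduce
\[
  L_i \;=\; \bigl\{ C : \text{the } i\text{-th bit of } y_C \text{ equals } 1 \bigr\}
\]
and verify $L_i \in \cc{S_2P}$ via the verifier $V_i(C, y, z)$ that accepts iff $V(C, y, z) \neq \bot$ and its $i$-th output bit equals $1$. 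When the $i$-th bit of $y_C$ is $1$, the ``yes-witness'' $y^*$ promised by $\fsp$ semantics forces $V_i(C, y^*, z) = 1$ for every $z$; when it is $0$, the ``no-witness'' $z^*$ forces $V(C, y, z^*) \in \{y_C, \bot\}$ and hence $V_i(C, y, z^*) = 0$ for every $y$.

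Next, I would apply Cai's theorem to each $L_i$ to obtain a $\cc{ZPP}^{\NP}$ algorithm $B_i$, amplified to failure probability at most $1/(3(n+1))$ by standard repetition. The pseudodeterministic algorithm on input $C$ then runs $B_1, \dots, B_{n+1}$ with independent randomness; if any $B_i$ returns $\bot$, it outputs $\bot$, otherwise it concatenates the answered bits into a string $y$ and outputs $y$. A union bound gives overall failure probability at most $1/3$, and on every non-failing run the output equals $y_C$ because each $B_i$ is zero-error and $y_C$ is a fixed function of $C$.

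The only delicate point --- and essentially the only thing that could go wrong --- is ensuring that a single fixed string $y_C$ is reconstructed rather than possibly different strings on different random tapes. This is automatic from the single-valuedness of $A$: the canonical $y_C$ is defined purely as a function of $C$, so each $B_i$ targets a fixed bit of a fixed string and no coupling of random tapes across the $n+1$ invocations is required. For this reason the theorem is essentially a one-line corollary of the preceding single-valued $\fsp$ algorithm combined with Cai's inclusion.
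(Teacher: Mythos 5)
Your proposal is correct and matches the paper's approach: the paper derives this theorem directly from the single-valued $\fsp$ algorithm via Cai's theorem $\cc{S_2P} \subseteq \cc{ZPP}^{\NP}$, and your bit-by-bit decomposition into the languages $L_i$ (each placed in $\cc{S_2P}$ using the symmetric witnesses of the $\fsp$ verifier) together with the union bound is exactly the standard elaboration of that one-line derivation.
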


\begin{corollary}[Informal]
    There are zero-error pseudodeterministic constructions for the following objects with an $\NP$ oracle for every input size $n$: Ramsey graphs, rigid matrices, pseudorandom generators, two-source extractors, linear codes, hard truth tables, and $K^{\poly}$-random strings.
\end{corollary}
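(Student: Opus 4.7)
The plan is to combine the pseudodeterministic $\cc{FZPP}^{\NP}$ algorithm for $\cc{Avoid}$ (the theorem just stated) with Korten's reductions \cite{Korten}, which show that each of the listed explicit construction tasks reduces in deterministic polynomial time to $\cc{Avoid}$. Concretely, for a target parameter $n$ (say, a Ramsey graph on $N=2^n$ vertices, or a rigid matrix of dimension $n$, etc.), Korten constructs an expanding circuit $C:\bits^m \to \bits^{m+1}$ such that any $y \notin \Im(C)$ encodes a valid object with the desired parameters, and the decoding map $y \mapsto \text{object}(y)$ is computable in deterministic polynomial time.

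Given this, I would first describe the generic template: on input $1^n$, deterministically build Korten's circuit $C_n$; run the pseudodeterministic $\cc{FZPP}^{\NP}$ algorithm $A$ on $C_n$; if $A$ returns $y_{C_n} \ne \bot$, output the decoded object, otherwise output $\bot$. Since $A$ outputs the \emph{same} canonical $y_{C_n}$ with probability at least $2/3$ and $\bot$ otherwise, and the decoding is a deterministic function of $y_{C_n}$, the resulting algorithm is itself a zero-error pseudodeterministic $\cc{FZPP}^{\NP}$ procedure that succeeds for every input size $n$. I would then enumerate the seven objects listed and, in each case, cite the specific reduction to $\cc{Avoid}$ (Ramsey graphs, hard truth tables, and $\cc{K^{poly}}$-random strings via \cite{Korten,RSW22}; rigid matrices via \cite{GLW22,GGNS23}; pseudorandom generators via \cite{CT21}; two-source extractors via \cite{CZ19,Li23}; linear codes via \cite{GLW22}), noting that each reduction is deterministic and works at every input size.

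The main substantive issue to check is that the composition genuinely preserves single-valuedness for every $n$, as opposed to only infinitely often. This relies on two features: (i) the reduction $1^n \mapsto C_n$ is deterministic, so two independent executions feed identical circuits to $A$; and (ii) by the preceding theorem, $A$ itself is pseudodeterministic on every circuit input, not just on a subsequence of sizes. The only remaining subtlety is a syntactic one: some reductions in the cited works state success ``for sufficiently large $n$'' or for parameter settings on a specific sublattice; for the construction problems listed, one verifies that a $C_n$ of the right size can be built at every $n$ (padding where necessary), so the all-inputs guarantee transfers cleanly.

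I do not anticipate a genuine mathematical obstacle here, since the heavy lifting was done both by Korten in setting up the reductions and by the preceding theorem in upgrading the $\cc{Avoid}$ algorithm to an all-$n$ pseudodeterministic one; the corollary is essentially a packaging step. The one place to be careful is to state the quantitative parameters (e.g., $N$-vertex Ramsey graphs with $O(\log N)$-size cliques/independent sets, matrices with near-maximum rigidity, PRGs with the right stretch) that each reduction actually delivers, so that the informal ``constructions'' in the statement have unambiguous formal meanings; I would defer these precise parameter settings to a short table or a remark pointing to the relevant citations.
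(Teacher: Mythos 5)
Your proposal is correct and matches the paper's intended argument: the paper states this corollary (informally) as an immediate consequence of the single-valued $\cc{FZPP}^{\NP}$ algorithm for $\cc{Avoid}$ composed with the known deterministic (or $\cc{FP}^{\NP}$) reductions from each explicit construction problem to $\cc{Avoid}$, which is exactly your template. Your added care about the composition preserving single-valuedness at every input size and about stating the quantitative parameters is consistent with (indeed slightly more detailed than) what the paper provides.
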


\subsubsection{Missing-String problem}
Lastly, we point out a connection to the $\cc{MissingString}$ problem. The $\cc{MissingString}$ problem is defined as follows: given a list of $m$ strings $x_1, \ldots, x_m \in \bits^n$ where $m < 2^n$, the goal is to output any string $y \in \bits^n$ not in the list. 

In \cite{VW23}, Vyas and Williams connected the circuit complexity of the $\cc{MissingString}$ with the (relativized) circuit complexity of $\cc{\Sigma_2E}$.

\begin{theorem}[\cite{VW23}]
    The following are equivalent:
    \begin{enumerate}
        \item $\cc{\Sigma_2E}^A \not\subset$ i.o.-$\cc{SIZE}^A[2^{\Omega(n)}]$ for every oracle A;
        \item for $m = 2^{\Omega(n)}$, the $\cc{MissingString}$ problem can be solved by a uniform family of size-$2^{\poly(n)}$ depth-3 $\cc{AC}^0$ circuits.
    \end{enumerate}
\end{theorem}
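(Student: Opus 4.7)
The plan is to prove the equivalence in two directions, using the standard correspondence between alternating quantifiers and layers of unbounded-fan-in gates.

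For $(2) \Rightarrow (1)$, I would fix any oracle $A$ and, on input $x \in \bits^\ell$, enumerate all $A$-oracle circuits of size $s = 2^\ell/\ell$: this yields a list of $m = 2^{\widetilde O(s)}$ truth tables of length $N = 2^\ell$, which sits in the regime $m = 2^{\Omega(N)}$ of the hypothesis. Applying the assumed depth-$3$ $\cc{AC}^0$ $\cc{MissingString}$ circuit to this list produces a truth table $T \in \bits^N$ differing from every enumerated circuit. Taking $L^A \cap \bits^\ell = T$ gives $L^A \not\in$ i.o.-$\cc{SIZE}^A[2^{\Omega(\ell)}]$ for every oracle $A$. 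Membership of $L^A$ in $\cc{\Sigma_2E}^A$ is then obtained by simulating the depth-$3$ $\cc{AC}^0$ circuit $\bigvee_y \bigwedge_z \bigvee_w \ell$ inside $\Sigma_2$: the outer $\mathrm{OR}$ and $\mathrm{AND}$ become the $\exists y \forall z$ of $\Sigma_2$, with each quantifier ranging over $\poly(N) = 2^{O(\ell)}$-bit strings, while the bottom $\bigvee_w$ of literals (each literal being a bit of an enumerated truth table, computable in time $2^{O(\ell)}$ from $A$) is absorbed into the polynomial-time verifier.

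For $(1) \Rightarrow (2)$, given an input $(x_1, \ldots, x_m)$ with $x_j \in \bits^n$, I would view the list as an oracle $A$ and output (a prefix of) the truth table of the hard $\cc{\Sigma_2E}^A$ language on inputs of length $n' = \Theta(\log n)$, where $n'$ is padded enough that the guaranteed hardness $2^{\Omega(n')}$ strictly exceeds the size $O(\log(mn))$ of the trivial lookup circuit $C(i) = A(j,i)$. If the output equalled some $x_j$, then $C$ would compute $L^A$ at the relevant inputs, contradicting hardness; hence the output must be missing from the list. Expressing the construction as a uniform depth-$3$ $\cc{AC}^0$ circuit of size $2^{\poly(n)}$ is done by unfolding the $\exists y \forall z \, V^A(\cdot)$ of $L^A$ into a top $\mathrm{OR}$ and middle $\mathrm{AND}$ of fan-in $2^{\poly(n)}$, implementing each oracle query made by the polynomial-time verifier $V$ as an $\mathrm{OR}$-of-literals lookup into the input list, and merging adjacent same-polarity layers.

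The main obstacle in both directions is the tightness of the alternation count: a naïve translation relates $\cc{\Sigma_2E}^A$ to depth-$4$ $\cc{AC}^0$ (or, symmetrically, depth-$3$ $\cc{AC}^0$ to $\Sigma_3$-time), because decoding an oracle query into literals introduces an extra alternation beneath the $\Sigma_2$ quantifiers. The key trick I would rely on is to arrange the $\Sigma_2$ verifier so that it makes its oracle queries non-adaptively and only as its final operation—so that the decoding layer has the same $\mathrm{OR}$/$\mathrm{AND}$ polarity as the adjacent $\forall$-quantifier layer and the two merge into a single $\cc{AC}^0$ level—and, dually, to ensure that the bottom $\bigvee_w$ of the given depth-$3$ $\cc{AC}^0$ circuit in the forward direction is structured so that its evaluation (for each fixed $y,z$) fits inside the $2^{O(\ell)}$-time $\Sigma_2$ predicate. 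A secondary bookkeeping issue in $(1)\Rightarrow(2)$ is padding $n'$ so that the constant hidden in $2^{\Omega(n')}$ strictly dominates the constant in the lookup-circuit size, which is handled by a standard amplification over input lengths.
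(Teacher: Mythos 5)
A preliminary remark: the paper you are critiquing does not prove this statement at all --- it is quoted from \cite{VW23} and used as a black box --- so there is no in-paper argument to compare yours against. Your high-level dictionary (top $\mathrm{OR} \leftrightarrow \exists$, middle $\mathrm{AND} \leftrightarrow \forall$, input literals $\leftrightarrow$ oracle bits, merge same-polarity layers) is the right starting point, but both directions of your outline have concrete gaps.

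In $(2)\Rightarrow(1)$, the step that fails is the evaluation of the bottom layer. The hypothesis hands you an \emph{arbitrary} depth-3 circuit of size $2^{\poly(N)}$ (with $N=2^{\ell}$ the truth-table length), so a bottom $\mathrm{OR}$ may have fan-in $2^{\poly(N)} = 2^{2^{O(\ell)}}$. After spending $\exists y$ on the top gate and $\forall z$ on the middle gate, the $2^{O(\ell)}$-time verifier cannot enumerate doubly-exponentially many literals, and guessing a true one costs a third alternation --- you land in $\cc{\Sigma_3E}^A$, not $\cc{\Sigma_2E}^A$. Your proposed remedy (``ensure that the bottom $\bigvee_w$ is structured so that its evaluation fits inside the $\Sigma_2$ predicate'') is not available to you: in this direction the circuit is given, not built by you. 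The equivalence is tight only because the circuits produced by the converse direction have bottom fan-in $\poly(N)$ (clause width equal to the number of oracle queries of the verifier), and that restriction must be tracked explicitly. A smaller issue: with $s = 2^{\ell}/\ell$ the number of size-$s$ oracle circuits is roughly $2^{2s\log s} \approx 2^{2N} > 2^{N}$, so your list has \emph{no} missing string; take $s = 2^{\epsilon\ell}$ and pad the list up to $m = 2^{\Omega(N)}$.

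In $(1)\Rightarrow(2)$, two ideas are missing. First, uniformity: statement (1) only guarantees that for each oracle $A$ \emph{some} $\cc{\Sigma_2E}^A$ machine with \emph{some} constant $\delta_A$ is hard; both may depend on $A$, i.e.\ on the input list, so ``the hard $\cc{\Sigma_2E}^A$ language'' does not define a single uniform circuit family. Extracting one machine and one $\delta$ valid for all oracles requires a compactness/diagonalization step (gluing together the finite oracle segments that fool each candidate machine). Second, and more seriously, the prefix step does not produce a contradiction: if the length-$n$ prefix of the truth table of $L^A$ at length $n'$ equals $x_j$, the lookup circuit agrees with $L^A$ only on the first $n$ of the $2^{n'} \gg n$ inputs, which is consistent with worst-case hardness of $L^A$ at length $n'$; whereas if you shrink $n'$ so that $2^{n'} = n$, the guaranteed hardness $2^{\Omega(n')} = n^{\Omega(1)}$ falls below the $\Theta(\log m) = \Theta(n)$ cost of addressing a row of the list, so again no contradiction. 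Reconciling the output length with the hardness scale is the crux of this direction and needs an explicit mechanism; ``standard amplification over input lengths'' does not supply one. I would recommend consulting the actual argument in \cite{VW23}, which navigates both obstacles.
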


As a corollary of our circuit lower bound which relativizes, we can conclusively claim that:

\begin{corollary}
    For $m = 2^{\Omega(n)}$, the $\cc{MissingString}$ problem can be solved by a uniform family of size-$2^{\poly(n)}$ depth-3 $\cc{AC}^0$ circuits.
\end{corollary}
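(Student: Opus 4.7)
My plan is to derive this corollary directly from the Vyas--Williams equivalence above, exploiting the fact that the main circuit lower bound of this paper for $\cc{S_2E}$ both lifts to $\cc{\Sigma_2E}$ and fully relativizes. It suffices to establish item (1) of the equivalence, namely that $\cc{\Sigma_2E}^A \not\subset i.o.$-$\cc{SIZE}^A[2^{\Omega(n)}]$ for every oracle $A$, and then invoke the direction $(1) \Rightarrow (2)$.

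For the lift, I would appeal to the standard, fully relativizing containment $\cc{S_2E}^A \subseteq \cc{\Sigma_2E}^A$, so that any language witnessing $\cc{S_2E}^A \not\subset i.o.$-$\cc{SIZE}^A[s(n)]$ automatically witnesses $\cc{\Sigma_2E}^A \not\subset i.o.$-$\cc{SIZE}^A[s(n)]$ for the same $s$. The paper's main circuit lower bound supplies exactly this with $s(n) = 2^n/n$ in every relativized world. To reach the slightly weaker $2^{\Omega(n)}$ bound demanded by Vyas--Williams, note that for any fixed $\epsilon < 1$ and all sufficiently large $n$ one has $2^{\epsilon n} \leq 2^n/n$; hence a language that requires circuit size $\geq 2^n/n$ almost everywhere also requires size $\geq 2^{\epsilon n}$ almost everywhere, and in particular does not lie in $i.o.$-$\cc{SIZE}^A[2^{\epsilon n}]$.

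With item (1) established for every oracle $A$, direction $(1) \Rightarrow (2)$ of the Vyas--Williams theorem immediately yields a uniform family of size-$2^{\poly(n)}$ depth-3 $\cc{AC}^0$ circuits solving $\cc{MissingString}$ for $m = 2^{\Omega(n)}$, which is the claimed corollary. I expect no real obstacle here beyond cleanly chaining the two results; the only points worth explicit mention are that the symmetric-to-existential inclusion and the $2^n/n$-to-$2^{\epsilon n}$ simplification both relativize, and both are immediate because they are syntactic. All of the content of the corollary is therefore concentrated in the (already proven) relativized lower bound for $\cc{S_2E}$ and in the prior work of Vyas and Williams.
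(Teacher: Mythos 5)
Your proposal is correct and matches the paper's intent exactly: the paper derives this corollary by combining its relativizing almost-everywhere lower bound (which lifts from $\cc{S_2E}^A$ to $\cc{\Sigma_2E}^A$ via the syntactic inclusion) with direction $(1)\Rightarrow(2)$ of the Vyas--Williams equivalence, and the observation that a $2^n/n$ lower bound subsumes the required $2^{\Omega(n)}$ bound is the right (and only) bookkeeping step needed.
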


\subsection{Proof Overview}

\subsubsection{Korten's Reduction}

We start by reviewing Korten's reduction from \cite{Korten}, which reduces $\cc{Avoid}$ on a circuit with $n$-bit stretch (i.e. maps $n$-bit input to $2n$-bit output) to $\cc{Avoid}$ on some other circuit with much longer stretch.


Given any circuit $C:\bits^{n} \rightarrow \bits^{2n}$ and parameter $T = n\cdot 2^k$, Korten builds another circuit $\cc{GGM}_T[C]:\bits^n \rightarrow \bits^T$ by applying the circuit $C$ in a perfect binary tree:
\begin{enumerate}
    \item Assign the root vertex $(0,0)$ with value $v_{0,0} = x$. Build a perfect binary tree of height $k$ where $(i,j)$ denotes the $j$th vertex on the $i$th level for $0 \le i \le k$ and $0 \le j \le 2^i - 1$. 
    \item For each vertex $(i,j)$ on the tree, evaluate $y = C(v_{i,j})$ and assign its left child with the first $n$ bits of $y$ and its right child with the last $n$ bits of $y$. 
    \item The output of $\cc{GGM}_T[C](x)$ is simply the concatenation of the values of the $2^k$ leaves.
\end{enumerate}

\begin{figure}[h]
	\begin{center}
			\begin{tikzpicture}[
					level distance=1cm,
					level 1/.style={sibling distance=4cm},
					level 2/.style={sibling distance=2cm},
					level 3/.style={sibling distance=1cm},
					every node/.style={rectangle,draw}
					]
					
\node {$v_{0,0}$}
    child { node {$v_{1,0}$}
        child { node {$v_{2,0}$}
            child { node {$v_{3,0}$}}    
            child { node {$v_{3,1}$}}
        }    
        child { node {$v_{2,1}$}
            child { node {$v_{3,2}$}}    
            child { node {$v_{3,3}$}}
        }
    }    
    child { node {$v_{1,1}$}
        child { node {$v_{2,2}$}
            child { node {$v_{3,4}$}}    
            child { node {$v_{3,5}$}}
        }    
        child { node {$v_{2,3}$}
            child { node {$v_{3,6}$}}    
            child { node {$v_{3,7}$}}
        }
    };    
					
				\end{tikzpicture}
		\end{center}
	\caption{An illustration of a GGM tree of height $3$.}
\end{figure}
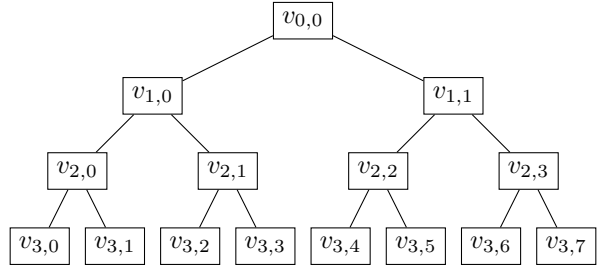



Notice that on any fixed input $x\in \bits^n$, every vertex on the GGM tree has an $n$-bit value. Hence, we call it a \emph{fully-assigned} GGM tree. It is not hard to see that one can efficiently (takes time linear in the height of the tree, $k$) evaluate the assigned value at any vertex by traversing the tree and apply the circuit $C$ at most $k$ times. In other words, a \emph{fully-assigned} GGM tree has small circuit complexity.

Korten's reduction asserts that given an $\NP$ oracle and any $f\in \bits^T \backslash \Im(\cc{GGM}_T[C])$, there is a deterministic algorithm that finds a non-image of $C$ and runs in time $\poly(T,n)$. And the algorithm is simple: 
\begin{enumerate}
    \item Set the assigned values of the leaves to be $f$.
    \item Next, traverse the tree in a simple bottom up manner. I.e. traverse the $2^{k-1}$ vertices on the $(k-1)$th level one by one (say, from right to left), then proceed to the $(k-2)$th level and so on, until reaching root.
    \item For each interval vertex $u$ traversed, assign $v_u$ with the lexicographically first\footnote{the original reduction from \cite{Korten} didn't specify the lexicographically first requirement. This requirement is important for \cite{CHR23} and us in order to obtain a single-valued algorithm, and it comes for free given the $\NP$ oracle.} $n$-bit string $x$ such that $C(x)$ correctly evaluates to the assigned values of $u$'s children (Note that this step uses the $\NP$ oracle).
    \item Whenever such string cannot be found, we successfully find a non-image of $C$ (i.e. the assigned values of $u$'s children). The algorithm now returns with the non-image of $C$ found and assigns $\bot$ to all remaining vertices.
\end{enumerate}

\begin{figure}[h]
	\begin{center}
			\begin{tikzpicture}[
					level distance=1cm,
					level 1/.style={sibling distance=4cm},
					level 2/.style={sibling distance=2cm},
					level 3/.style={sibling distance=1cm},
					every node/.style={rectangle,draw}
					]
					
  \node {$\bot$}
    child { node {$\bot$}
        child { node {$\bot$}
            child { node {$v_{3,0}$}}    
            child { node {$v_{3,1}$}}
        }    
        child { node[very thick] {$\bot$}
            child { node {$v_{3,2}$}}    
            child { node {$v_{3,3}$}}
        }
    }    
    child { node {$\bot$}
        child { node {$v_{2,2}$}
            child { node {$v_{3,4}$}}    
            child { node {$v_{3,5}$}}
        }    
        child { node {$v_{2,3}$}
            child { node {$v_{3,6}$}}    
            child { node {$v_{3,7}$}}
        }
    };    
					
				\end{tikzpicture}
		\end{center}
\caption{An illustration of the \emph{partially-assigned} GGM tree from running Korten's algorithm.}
\end{figure}
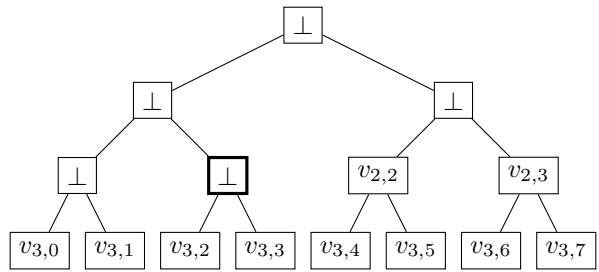


\subsubsection{History of the Reduction}

The computational history of Korten's reduction on fixed input $(C,f)$ is fully characterised by a \emph{partially-assigned} GGM tree i.e. some of the vertices are assigned $\bot$. We are interested in the computaional history because it has a few nice properties:

\begin{enumerate}
    \item \textbf{Contains a Canonical Solution:} Notice that the execution of Korten's algorithm is fully deterministic. Hence, it produces the same non-image of $C$ given the same $f$. And the solution is clearly stored in the \emph{partially-assigned} GGM tree. 
    \item \textbf{Locally Verifiable:} Every step of execution is very simple, making them locally verifiable. In other words, to verify any particular step of the execution, we only need to look at a constant number of assigned values on the \emph{partially-assigned} GGM tree.
\end{enumerate}

Moreover, by choosing $T = 2n\cdot 2^{2n}$, we know an $f$ that is trivially not in the image of $\cc{GGM}_T[C]$: the concatenation of all $2n$-length strings. 

\subsubsection{Finding a Short Description of the History}

The downside of choosing $T = 2n\cdot 2^{2n}$ is that the size of the computational history is now exponential in $n$. The locally verifiable property allows us to use a universal quantifier ($\forall$) and a $O(\log T)$-bit variable to verify all $O(T)$ steps of the algorithm, but ultimately we need a short ($\poly(n)$) description of the computational history if we want to, for example build a $\fsigp$ algorithm. 

The authors in \cite{CHR23} appeal to the iterative win-win argument for such a short description: they manage to show that within a large interval of input size, there exists at least one input size $n$ such that the corresponding computational history admits a short description. In fact, the computational history will be the output of a (different) \emph{fully-assigned} GGM tree, leveraging the fact that \emph{fully-assigned} GGM tree has small circuit complexity.


We take a slightly different approach: the key observation is that, by changing traversal order in Korten's algorithm, the resulting computational history (i.e. a \emph{partially-assigned} GGM tree) also has small circuit complexity! More specifically, if we change the traversal order to a post-order traversal (i.e. traverse the left subtree, then the right subtree, and finally the root), the resulting \emph{partially-assigned} GGM tree can be decomposed into $O(n)$ smaller \emph{fully-assigned} GGM trees. See \cref{fig:modified} for an illustration. The roots of the \emph{fully-assigned} GGM trees are drawn in circles.

As such, we obtain a short description of the computational history: simply store the roots of all these $O(n)$ \emph{fully-assigned} GGM tree.

\begin{figure}[h]
	\begin{center}
			\begin{tikzpicture}[
					level distance=1cm,
					level 1/.style={sibling distance=4cm},
					level 2/.style={sibling distance=2cm},
					level 3/.style={sibling distance=1cm},
					every node/.style={rectangle,draw}
					]
					
  \node {$\bot$}
    child { node[circle] {$v_{1,0}$}
        child { node {$v_{2,0}$}
            child { node {$v_{3,0}$}}    
            child { node {$v_{3,1}$}}
        }    
        child { node {$v_{2,1}$}
            child { node {$v_{3,2}$}}    
            child { node {$v_{3,3}$}}
        }
    }    
    child { node {$\bot$}
        child { node[circle] {$v_{2,2}$}
            child { node {$v_{3,4}$}}    
            child { node {$v_{3,5}$}}
        }    
        child { node[very thick] {$\bot$}
            child { node[circle] {$v_{3,6}$}}    
            child { node[circle] {$v_{3,7}$}}
        }
    };    
					
				\end{tikzpicture}
		\end{center}
\caption{An illustration of the \emph{partially-assigned} GGM tree from running modified Korten's algorithm.}
\label{fig:modified}
\end{figure}
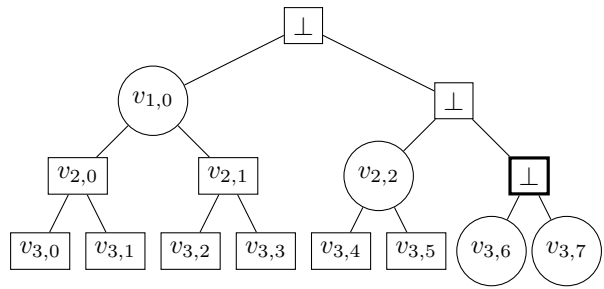


\subsubsection{Generalizing to $\fsp$}
All the ingredients above allow us to build a single-valued $\fsigp$ algorithm for $\cc{Avoid}$. In order to generalise it to a $\fsp$ algorithm, we need a selector algorithm that picks the better witness (in this case, the correct description of the computational history). Now that we have a small description, this turns out to be an easy task. It is now easy to identify a single vertex with different assigned values in the two histories, and traverse down the tree until we hit the leaves, where we know the correct assigned value (i.e. the concatenation of all $2n$-length strings).

\section{Preliminaries}

\begin{definition}
    Let $s: \N \rightarrow \N$. We say that a language $L \in \cc{SIZE}[s(n)]$ if $L$ can be computed by circuit families of size $O(s(n))$ for all sufficiently large input size $n$.
\end{definition}

\begin{definition}
    Let $s: \N \rightarrow \N$. We say that a language $L \in i.o.\text{-}\cc{SIZE}[s(n)]$ if $L$ can be computed by circuit families of size $O(s(n))$ for infinitely many input size $n$.
\end{definition}

By definition, we have $\cc{SIZE}[s(n)] \subseteq i.o.\text{-}\cc{SIZE}[s(n)]$. Hence, circuit lower bounds against $i.o.\text{-}\cc{SIZE}[s(n)]$ are stronger and sometimes denoted as \emph{almost-everywhere} circuit lower bound in the literature. 

\begin{definition}
    The Range Avoidance ($\cc{Avoid}$) problem is defined as follows: given as input the description of a Boolean circuit $C:\bit^n \rightarrow \bit^{m}$, for $m > n$, find a $y \in \{0, 1\}^m$ such that $\forall x \in \{0, 1\}^n: C(x) \neq y$.
\end{definition}

We assume basic familiarity with computational complexity theory, such as complexity classes in the polynomial hierarchy (see e.g. \cite{AB09,Gol08} for references). 

\begin{definition}
    Let $T:\N \rightarrow \N$. We say that a language $L\in \cc{S_2TIME}[T(n)]$, if there exists an $O(T(n))$-time verifier $V(x, \pi_1, \pi_2)$ that takes $x\in\bits^n$ and $\pi_1, \pi_2 \in \bits^{T(n)}$ as input, satisfying that:
    \begin{itemize}
        \item if $x\in L$, then there exists $\pi_1$ such that for every $\pi_2$, \\ $V(x, \pi_1, \pi_2) = 1$, and
        \item if $x\notin L$, then there exists $\pi_2$ such that for every $\pi_1$,\\ $V(x, \pi_1, \pi_2) = 0$.
    \end{itemize}
    Moreover, we say $L \in \cc{S_2E}$ if $L\in \cc{S_2TIME}[T(n)]$ for some $T(n)\leq 2^{O(n)}$, and  $L \in \cc{S_2P}$ if $L\in \cc{S_2TIME}[p(n)]$ for some polynomial $p$.
\end{definition}

A search problem $\Pi$ maps every input $x \in \bits^*$ into a solution set $\Pi_x \subseteq \bits^*$. An algorithm $A$ solves the search problem $\Pi$ on input $x$ if $A(x) \in \Pi_x$.

\begin{definition}[Single-valued $\fsigp$ algorithm]
    A single-valued $\fsigp$ algorithm $A$ is specified by a polynomial $\ell(\cdot)$ together with a polynomial-time algorithm $V_A(x, \pi_1, \pi_2)$. On an input $x\in\bits^*$, we say that $A$ outputs $y_x \in \bits^*$, if the following hold:
    \begin{enumerate}
        \item There is a $\pi_1 \in \bits^{\ell(|x|)}$ such that for every $\pi_2 \in \bits^{\ell(|x|)}$, $V_A(x,\pi_1, \pi_2)$ outputs $y_x$.
        \item For every $\pi_1 \in \bits^{\ell(|x|)}$ there is a $\pi_2 \in \bits^{\ell(|x|)}$, such that $V_A(x,\pi_1, \pi_2)$ outputs either $y_x$ or $\bot$.
    \end{enumerate}
    And we say that $A$ solves a search problem $\Pi$ if on any input $x$ it outputs a string $y_x$ and $y_x \in \Pi_x$.
    
\end{definition}
\begin{definition}[Single-valued $\fsp$ algorithm]
    A single-valued $\fsp$ algorithm $A$ is specified by a polynomial $\ell(\cdot)$ together with a polynomial-time algorithm $V_A(x, \pi_1, \pi_2)$. On an input $x\in\bits^*$, we say that $A$ outputs $y_x \in \bits^*$, if the following hold:
    \begin{enumerate}
        \item There is a $\pi_1 \in \bits^{\ell(|x|)}$ such that for every $\pi_2 \in \bits^{\ell(|x|)}$, $V_A(x,\pi_1, \pi_2)$ outputs $y_x$.
        \item There is a $\pi_2 \in \bits^{\ell(|x|)}$ such that for every $\pi_1 \in \bits^{\ell(|x|)}$, $V_A(x,\pi_1, \pi_2)$ outputs $y_x$.
    \end{enumerate}
    And we say that $A$ solves a search problem $\Pi$ if on any input $x$ it outputs a string $y_x$ and $y_x \in \Pi_x$.
\end{definition}

\section{Modified Korten's reduction}

\paragraph{Notation.} We follow the notations from \cite{CHR23} closely: Let $s$ be a $n$-bit string. We use 0-index where $s_0$ denotes the first bit of $s$ and $s_{n-1}$ denotes the last bit of $s$. Let $i < j$, we use $s_{[i,j)}$ to denote the substring of $s$ from the $i$th bit to the $(j-1)$th bit. We use $s_1 \circ s_2$ to denote the concatenation of two strings $s_1$ and $s_2$.

We identify any vertex in a perfect binary tree of height $2n+1$ with a tuple $(i,j)$ where $i \in [0, 2n+1]$ and $j \in [0, 2^i - 1]$, indicating that the vertex is the $j$th vertex on level $i$. Note that the two children of $(i,j)$ are $(i+1, 2j)$ and $(i+1, 2j+1)$.

\subsection{The GGM Tree}
Recall that the GGM tree construction from \cite{GGM86}  (vaguely speaking) increases the stretch of a circuit $C:\bits^n \rightarrow \bits^{2n}$ to arbitrarily long by applying $C$ in a perfect binary tree manner. 

\begin{definition}[The GGM tree construction \cite{GGM86}]
    Let $C:\bits^n \rightarrow \bits^{2n}$ be a circuit. Let $n, T \in \N$ be such that $T \geq 4n$ and let $k$ be the smallest integer such that $2^kn \geq T$. The function $\cc{GGM}_T[C]:\bits^n \rightarrow \bits^T$ is defined as follows.
    
    Consider a perfect binary tree with $2^k$ leaves, where the root is on level $0$ and the leaves are on level $k$. Each node is assigned a binary string of length $n$, and for $0\leq j < 2^i$, denote $v_{i,j} \in \bits^n$ the value assigned to the vertex $(i,j)$ (i.e. $j$-th node on level $i$). Let $x\in \bits^n$. We perform the following computation to obtain $\cc{GGM}_T[C](x)$: we set $v_{0,0}:=x$, and for each $0 \leq i <k, 0 \leq j < 2^i$, we set $v_{i+1, 2j}:=C(v_{i,j})_{[0,n)}$ (i.e. the first half of $C(v_{i,j})$) and $v_{i+1, 2j+1}:=C(v_{i,j})_{[n,2n)}$ (i.e. the second half of $C(v_{i,j})$). 

    Finally, we concatenate all values of the leaves and take the first $T$ bits as the output:
    \[ \cc{GGM}_T[C](x) := (v_{k,0} \circ v_{k,1} \circ \cdots \circ v_{k, 2^k-1})_{[0,T)} \; . \]
\end{definition}

For what we need, $T$ is always set to $2n \cdot 2^{2n} = n \cdot 2^{2n+1}$. In other words, the GGM tree will always have height $2n+1$. 

It is known that the output of GGM tree has a small circuit \cite[Lemma 3.2]{CHR23}. For what we need, we note that the assigned value of any vertex in a GGM tree on a given input has a small circuit.

\begin{lemma}\label{lem:GGM_eval}
    Let $\cc{GGMEval}(C, T, x, (i,j))$ denote the $n$-bit assigned value $v_{i,j}$ in the evaluation of the GGM tree $\cc{GGM}_T[C](x)$. There is an algorithm running in $\widetilde{O}(|C|\cdot \log T)$ time that, given $C, T, x, (i,j)$, outputs $\cc{GGMEval}(C, T, x, (i,j))$.
\end{lemma}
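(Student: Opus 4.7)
The plan is to compute $v_{i,j}$ by walking down from the root along the unique root-to-$(i,j)$ path in the perfect binary tree, applying the circuit $C$ once per level. Concretely, the path is encoded by the binary expansion of $j$: writing $j = b_1 b_2 \cdots b_i$ with $b_1$ the most significant bit, the path visits vertices $(0,0), (1, b_1), (2, b_1 b_2), \dots, (i, b_1 b_2 \cdots b_i) = (i,j)$. At each step, given the current value $v_{\ell, \lfloor j/2^{i-\ell}\rfloor}$, we evaluate $C$ on it, obtaining a $2n$-bit string, and take the first $n$ bits if the next path bit $b_{\ell+1}$ equals $0$ and the last $n$ bits otherwise. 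This is exactly the definition of the GGM assignment, so the procedure returns $v_{i,j}$.

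For the running time, first compute $k = \lceil \log_2(T/n) \rceil$ and the $i$-bit binary representation of $j$ in time $\widetilde{O}(\log T)$. The main loop then performs at most $i \le k = O(\log T)$ iterations; each iteration performs a single evaluation of $C$ on an $n$-bit input in time $O(|C|)$ and a constant number of length-$n$ substring/concatenation operations, which cost $O(n) \le O(|C|)$ since $|C| \ge n$ for any circuit with $n$ inputs. The total cost is therefore $O(|C| \cdot \log T)$, with any polylogarithmic overhead absorbed into the $\widetilde{O}(\cdot)$.

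Since this is a straightforward unrolling of the recursive definition, there is no real obstacle; the only thing worth being careful about is the indexing convention (which bit of $j$ controls which level of descent, and the $[0,n)$ vs.\ $[n,2n)$ halves of $C$'s output), matching the convention fixed in the definition of $\cc{GGM}_T[C]$. The edge case $i = 0$ simply returns $x$ with no circuit evaluations, and the case $j \ge 2^i$ is excluded by assumption.
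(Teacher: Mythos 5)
Your proposal is correct and is essentially identical to the paper's own (one-line) proof sketch: walk the root-to-$(i,j)$ path, applying $C$ once per level, with the running time bounded by the tree height $O(\log T)$ times the cost $O(|C|)$ of one evaluation. The extra detail you give about reading the path off the binary expansion of $j$ is a faithful elaboration of the same argument.
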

\begin{proof}[Proof sketch]
    To compute $v_{i,j}$, it suffices to traverse the GGM tree from the root to the vertex $(i,j)$, applying the circuit $C$ in each step. The running time is clearly bounded by the $\widetilde{O}(|C|\cdot \log T)$ since the GGM tree has height $O(\log T)$.
\end{proof}

\subsection{Modifying Korten's Reduction}
Korten's reduction \cite{Korten} asserts that given a hard truth table $f \notin \Im(\cc{GGM}_T[C])$ and an $\NP$ oracle, one can find a non-image for $C$ in $\poly(T, n)$ time. 

Note that on a fixed $f$, Korten's reduction produces the same output. Hence, if we could efficiently simulate the reduction, we obtain an efficient single-valued algorithm. The remaining parts of this section aim to show that Korten's reduction (after our modification) indeed has a small description.

We modify Korten's reduction in the following manner: instead of traversing the perfect binary tree in a simple bottom-up manner, we perform a \emph{post-order traversal} (i.e. traverse the left subtree, then the right subtree and finally the root). 

For simplicity, we will fix $T$ to be $n\cdot 2^{2n+1}$ and the perfect binary tree has height $2n+1$. We note that this choice of $T$ (i.e. exponential in $n$) is the ``base case'' or ``worst case'' in the iterative win-win argument in \cite{CHR23}. Since we can handle even the ``worst case'', we manage to completely bypass the iterative win-win argument.

\begin{fact}
    In a post-order traversal, any root vertex of a subtree is traversed after all other vertices in the subtree. Any vertex in the right subtree is traversed after all vertices in the left subtree.
\end{fact}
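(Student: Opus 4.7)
The plan is to prove both clauses by induction on the height of the subtree, essentially unrolling the recursive definition of post-order traversal. Recall that post-order traversal on a subtree rooted at $u$ is defined recursively as: first recursively post-order traverse the left subtree of $u$, then recursively post-order traverse the right subtree of $u$, and finally visit $u$ itself. Both claims in the Fact are direct structural consequences of this definition.

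For the first claim, I would induct on the height $h$ of a subtree rooted at some vertex $u$. The base case $h = 0$ (a leaf) is vacuous since $u$ has no other descendants. For the inductive step, let $u$ have left and right children $u_L, u_R$, and assume the claim holds for the subtrees rooted at $u_L$ and $u_R$. By definition, the post-order traversal of $u$'s subtree visits (in order) all vertices in the left subtree, then all vertices in the right subtree, then $u$. By the inductive hypothesis applied to each child, every vertex in the left subtree and every vertex in the right subtree is visited strictly before $u$, which together cover all descendants of $u$.

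For the second claim, let $u$ be any vertex, with left subtree $L_u$ and right subtree $R_u$. I would show by the same kind of induction (on the height of the ancestor whose subtrees we are comparing) that the entire traversal of $L_u$ completes before any vertex of $R_u$ is visited. Again, this is immediate from the recursive definition at $u$: the step ``recursively traverse the left subtree'' runs to completion, and only then does the step ``recursively traverse the right subtree'' begin. So every vertex in $R_u$ appears strictly later in the traversal order than every vertex in $L_u$.

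There is essentially no obstacle here; the only thing to be careful about is quantifier scope, namely that ``the right subtree'' and ``the left subtree'' in the statement refer to the two subtrees of a common parent (the Fact is otherwise false: a vertex in the right subtree of one node can certainly come before a vertex in the left subtree of an unrelated node). Once this is pinned down, both parts reduce to a one-line unfolding of the definition, so in the write-up I would likely present the proof as a single short paragraph invoking induction on subtree height, rather than two separate inductions.
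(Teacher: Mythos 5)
Your proof is correct; the paper states this Fact without any proof, treating it as immediate from the recursive definition of post-order traversal, and your induction on subtree height is just a careful formalization of that same observation (including the right reading that ``left subtree'' and ``right subtree'' refer to the two subtrees of a common parent).
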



For the ease of presentation, we define the total order $<_P$ for all vertices on a perfect binary tree to be the post-order traversal order. In other words, $u_1 <_P u_2$ if and only if $u_1$ should be traversed before $u_2$.

\begin{fact}
    Given two vertex $u_1 \neq u_2$ in a perfect binary tree, there is an algorithm that decides whether $u_1 <_P u_2$ or $u_2 <_P u_1$ and runs in time linear in the height of the tree. 
\end{fact}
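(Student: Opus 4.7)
The plan is to exploit the canonical path encoding of vertices in a perfect binary tree. The vertex $(i,j)$ corresponds to the unique path from the root of length $i$ whose $\ell$-th step is ``left'' or ``right'' according as the $\ell$-th most significant bit of the $i$-bit binary representation of $j$ is $0$ or $1$ (this is consistent with the paper's convention that the children of $(i,j)$ are $(i+1, 2j)$ and $(i+1, 2j+1)$). Given $u_1 = (i_1, j_1) \neq u_2 = (i_2, j_2)$, I would first compute both path strings in time $O(i_1 + i_2) = O(h)$ by reading off the bits of $j_1$ and $j_2$, and then scan the two strings in parallel from the root.

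The structural fact driving the comparison is the recursive definition of post-order traversal: within any subtree, every node of the left subtree is visited before every node of the right subtree, and the subtree's root is visited strictly after both. The parallel scan therefore splits into exactly two cases. If the two paths first differ at some depth $d$ with both strings extending past $d$, then $u_1$ and $u_2$ sit in distinct subtrees of their lowest common ancestor; the one whose $d$-th path bit is $0$ (i.e.\ the one in the left subtree of the LCA) is visited first, so it is the smaller vertex under $<_P$. If instead one path is a proper prefix of the other, then one vertex is a proper ancestor of the other, and the ancestor is visited later, so the descendant is the smaller one. Either way, the algorithm outputs the corresponding inequality after inspecting at most $h$ bit positions.

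Correctness follows by a straightforward induction on the height of the tree, using the recursive decomposition of post-order into (left subtree, right subtree, root). The running time is $O(h)$, bounded by the number of path bits examined plus the cost of extracting those bits from $j_1$ and $j_2$. There is no real obstacle here; the only mild care required is in the prefix case, where the direction of the inequality flips because an ancestor is traversed \emph{after} its descendants, but this is handled uniformly by the prefix-comparison scheme above.
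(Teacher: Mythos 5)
Your proposal is correct and follows essentially the same route as the paper: both arguments reduce the comparison to locating the lowest common ancestor (which your parallel scan of the path bit-strings computes implicitly) and then apply the same two-case analysis, namely that the vertex in the left subtree of the LCA precedes the one in the right subtree, and that a proper ancestor is traversed after its descendant. No gaps.
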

\begin{proof}[Proof sketch]
    The algorithm simply finds the lowest common ancestor $u_a$ of $u_1$ and $u_2$. By definition of the lowest common ancestor, $u_1$ and $u_2$ cannot live in the same proper subtree of $u_a$. 
    
    The vertex living in the left subtree of $u_a$ will be traversed first. If none of them lives in the left subtree of $u_a$, then one of them must be $u_a$ itself. In this case, the vertex living in the right subtree will be traversed first.
\end{proof}

\begin{algorithm}
\caption{$\cc{Korten}'(C,f)$: Modified Korten's reduction}\label{alg:modified_Korten}
\KwIn{$C:\bits^n \rightarrow \bits^{2n}$ denotes the input circuit, and $f \in \bits^T \backslash \Im(\cc{GGM}_T[C])$ denotes the input hard truth table.}
\KwOut{A non-output of $C$.}
\KwData{A perfect binary tree of height $2n+1$ that contains the computational history.}
\For{$j \gets 0$ to $2^{2n+1}-1$}{
    $v_{2n+1,j} \gets f_{[jn, (j+1)n)}$ \tcp*{set $f$ to the leaves}
}
\For{vertex $(i,j)$ in the Post-Order Traversal}{
    Set $v_{i,j}$ be the lexicographically smallest string such that $C(v_{i,j}) = v_{i+1, 2j} \circ v_{i+1, 2j+1}$ 
    \tcp*{this step requires a $\NP$ oracle}
    \If{$v_{i,j}$ does not exist}{
        Set all remaining vertices $\bot$ \;
        \KwRet{$v_{i+1, 2j} \circ v_{i+1, 2j+1}$} \;
    }
}
\KwRet{$\bot$}\;
\end{algorithm}

\subsection{History of \texorpdfstring{$\cc{Korten}'(C,f)$}{Korten'(C,f)}}

The computational history of $\cc{Korten}'(C,f)$ is essentially a \emph{partially-assigned} perfect binary tree of height $2n+1$ where each vertex $(i,j)$ stores a $n$-bit string $v_{i,j}$ or $\bot$. Unlike \cite{CHR23} where they view the computational history as a long string, we shall keep viewing it as a perfect binary tree and exploit the tree structure. Towards that end, we call it $\cc{Histree}(C,f)$. 

\begin{definition}[the computational history of $\cc{Korten}'(C,f)$]
    Let $n, T\in \N$ be such that $T = n \cdot 2^{2n+1}$. Let $C:\bits^n \rightarrow \bits^{2n}$ be a circuit, and $f\in \bits^T$ be a ``hard truth table'' in the sense that $f \notin \Im(\cc{GGM}_T[C])$. The computational history of $\cc{Korten}'(C,f)$, denoted as $\cc{Histree}(C,f)$, is the partially-assigned perfect binary tree obtained by executing $\cc{Korten}'(C,f)$.
\end{definition}

Let $h := \cc{Histree}(C,f)$. For any vertex $u$ in the perfect binary tree, we use $h(u)$ to denote the value $v_u$ stored at $u$. We use $c_L(u)$ and $c_R(u)$ to denote the left child and right child of $u$.

\begin{definition}[Proper left children]
    Given a set of vertices $S$ from a binary tree, we define the set of proper left children of $S$ to be:
        \[ \{u: \exists w\in S, c_L(w) = u, u \notin S \} \; .\]
\end{definition}

The following lemma shows that $h$ has a succinct description.

\begin{lemma}\label{lem:short_description}
    Let $n, T\in \N$ be such that $T = 2n \cdot 2^{2n}$. Let $C:\bits^n \rightarrow \bits^{2n}$ be a circuit, and $f\in \bits^T$. Let $h:=\cc{Histree}(C,f)$. $h$ admits a unique description $D_h$ such that:
    \begin{itemize}
        \item $|D_h| \leq O(n) \cdot \log T$.
        \item There is an algorithm $\cc{Eval}$ that takes in input $D_h$ and any vertex $(i,j)$, outputs $h(i,j)$ in time $\poly(n) \cdot \log T$.
    \end{itemize}
\end{lemma}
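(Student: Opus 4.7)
The plan is to exploit the post-order traversal to decompose the processed portion of $\cc{Histree}(C,f)$ into $O(n)$ fully-assigned GGM subtrees, each of which is determined by the assigned value at its root. Since $f \notin \Im(\cc{GGM}_T[C])$, I first argue that $\cc{Korten}'(C,f)$ cannot traverse all $2^{2n+1}-1$ internal vertices successfully: if it did, the algorithm would have computed a value at the root of the full tree that, by the invariant $C(v_w) = v_{c_L(w)} \circ v_{c_R(w)}$ maintained at every processed internal vertex, would be a preimage of $f$ under $\cc{GGM}_T[C]$. So let $u^* = (i^*, j^*)$ be the unique vertex at which the algorithm first fails to find a preimage and returns. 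Every vertex $u$ with $u <_P u^*$ carries an actual $n$-bit value in $h$, and every other vertex (including $u^*$) carries $\bot$.

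Next, I characterize the processed region. Walk the path $a_0 = \text{root}, a_1, \dots, a_k = u^*$. At the moment the algorithm aborts at $u^*$, the set of processed vertices consists of (i) all proper descendants of $u^*$, and (ii) for each ancestor $a_i$ of $u^*$ such that $u^*$ lies in the right subtree of $a_i$, the entire left subtree of $a_i$. This is because post-order fully processes the left subtree of each such $a_i$ before recursing into its right subtree, and the ancestors themselves have not yet been visited. Grouping, the processed vertices are partitioned into $O(n)$ \emph{complete} subtrees: the two subtrees rooted at $c_L(u^*)$ and $c_R(u^*)$, plus one subtree rooted at $c_L(a_i)$ for each ``right turn'' ancestor. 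On each such subtree, the invariant $C(v_w) = v_{c_L(w)} \circ v_{c_R(w)}$ holds at every internal $w$, so the value at its root together with $C$ determines the value at every descendant exactly as in the GGM construction. Hence each of these chunks is a \emph{fully-assigned} GGM subtree in the sense of \cref{lem:GGM_eval}.

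Given this, I define $D_h$ to consist of (a) the identifier of $u^*$ (costing $O(\log T)$ bits), and (b) the $n$-bit value at the root of each of the $O(n)$ chunks, listed in canonical order determined by $u^*$. The total length is $O(\log T) + O(n)\cdot n = O(n)\cdot \log T$, since $\log T = \Theta(n)$. Clearly $D_h$ is uniquely determined by $h$. The evaluation algorithm $\cc{Eval}(D_h, (i,j))$ proceeds by computing the lowest common ancestor $a$ of $(i,j)$ and $u^*$ (doable in $O(\log T)$ time from the coordinates). If $(i,j)$ is an ancestor of $u^*$, or $(i,j) = u^*$, or $(i,j)$ lies in the right subtree of $a$ while $u^*$ lies in the left (equivalently $(i,j) >_P u^*$), output $\bot$. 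Otherwise $(i,j)$ lies in a unique chunk (the subtree under $c_L(u^*)$ or $c_R(u^*)$ if $a = u^*$, or the subtree under $c_L(a)$ otherwise); look up that chunk's root value $x$ in $D_h$ and return $\cc{GGMEval}(C, T', x, (i-d, j \bmod 2^{i-d}))$ where $d$ is the depth of the chunk root, which by \cref{lem:GGM_eval} runs in $\widetilde{O}(|C|\cdot \log T) = \poly(n) \cdot \log T$ time.

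The main obstacle is verifying that the chunks really do cover exactly the $<_P u^*$ vertices with no overlap and that the LCA-based case analysis in $\cc{Eval}$ correctly routes every query $(i,j)$ to the right chunk or to $\bot$; this is a careful case analysis on the relative position of $(i,j)$ and $u^*$, but is purely combinatorial and relies only on the definition of post-order. Once this is in place, uniqueness and the size/time bounds are immediate from the construction.
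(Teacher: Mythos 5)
Your proposal is correct and follows essentially the same route as the paper: you identify the terminating vertex $u^*$, observe that the processed region decomposes into $O(n)$ fully-assigned GGM subtrees rooted at the left children of the ``right-turn'' ancestors of $u^*$ together with the two children of $u^*$ (these are exactly the paper's ``proper left children'' of the root-to-$u^*$ path plus the right child of $u^*$), store those root values as $D_h$, and evaluate via the same LCA-based case analysis combined with \cref{lem:GGM_eval}. The only difference is cosmetic (you index chunks by $u^*$ and a canonical order rather than storing vertex labels explicitly), and you additionally make explicit why $u^*$ must exist, which the paper leaves implicit.
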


\begin{proof}

    We start by describing $D_h$.

    Let $u^* = (i^*, j^*)$ be the vertex where \cref{alg:modified_Korten} finds a solution and terminates. Let $S = \{ u_0 = (0,0), u_1, u_2, \ldots, u^*\}$ be the set of vertices on the unique path starting from the root $(0,0)$ to $u^*$. Note that all vertices in $S$ are assigned $\bot$. 
    
    $D_h$ is defined to contain all \emph{proper left children} of $S$ and the right child of $u^*$, as well as all the stored values in these vertices. We further note that all these vertices have non $\bot$ stored values. In particular, consider any proper left children vertex $u_L$, it lives in the left subtree of its parent while $u^*$ lives in the right subtree. So $u_L$ must have already been traversed when the algorithm terminates. 

    Note that $D_h$ contains both children of $u^*$ and hence the output of $\cc{Korten}'(C,f)$. It is clear from how we construct $D_h$ that $|D_h| \leq O(n) \cdot \log T$ since any path on the tree contains $O(\log T)$ vertices and every vertex stored in $D_h$ carries $O(n)$ bits of information. Also, $D_h$ is uniquely defined for any fixed $h$.

    Next, we show how to efficiently evaluate $h(i,j)$ given any vertex $(i,j)$ in the perfect binary tree. Notice that any subtree in $h$ is also a (smaller) GGM tree. From \cref{lem:GGM_eval}, if we know the stored value of any ancestor of $(i,j)$, we can efficiently (in time $\poly(n) \cdot \log T$) evaluate $h(i,j)$.
    
    Therefore, it suffices to show that for any $(i,j)$, one of its (non $\bot$) ancestors is stored in $D_h$:
    \begin{enumerate}
        \item If $u^*$ is an ancestor of $(i,j)$, then one of $u^*$'s children is an ancestor of $(i,j)$ and we know both children are stored in $D_h$.
        \item If $(i,j)$ is an ancestor of $u^*$, then $h(i,j) = \bot$.
        \item Otherwise, let $u_a$ be the lowest common ancestor of $u^*$ and $(i,j)$, and we know that $u_a \in S$. If $(i,j)$ falls in the left subtree of $u_a$, then the left child of $u_a$ is an ancestor of $(i,j)$ which is stored in $D_h$. If $(i,j)$ falls in the right subtree of $u_a$ then we argue that $h(i,j) = \bot$. This is because the algorithm stopped at $u^*$ living in the left subtree of $u_a$, and would not have traversed $(i,j)$. 
    \end{enumerate}
    This concludes the proof.
\end{proof}

The final ingredient we need is that $D_h$ admits a $\Pi_1$ verifier on whether its corresponding computational history $h$ is the correct one.
\begin{lemma}[$\Pi_1$ verification of the history]\label{lem:verify_history}
    Let $h := \cc{Histree}(C,f)$.
        There is an oracle algorithm $V$ with input parameter $T, n$ such that the following holds:
        \begin{enumerate}
            \item $V$ takes $\tilde{f} \in \bits^T$ as an oracle, $C$, $\widetilde{D_h}$ and $w\in \bits^{2 \log T + n}$ as inputs. It runs in $\poly(n)$ time.
            \item $D_h$ defined on $h := \cc{Histree}(C,f)$ is the unique string satisfying the following:
            \[ V^{f}(C, D_h, w) = 1, \hspace{2em} \forall w\ \in \bits^{2\log T + n} \; . \]
    \end{enumerate}
\end{lemma}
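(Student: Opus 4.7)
The plan is to build $V^f$ as a $\Pi_1$ verifier that first parses $\widetilde{D_h}$ deterministically to recover the claimed path $S$ from the root to a terminal vertex $u^*$ and the set $R$ of stored-value vertices (the proper left children of $S$ together with $c_R(u^*)$), rejecting any syntactic malformity. Using $\cc{Eval}$ from \cref{lem:short_description}, $V$ can then resolve the stored value $h(u)$ at any vertex $u$ in $\poly(n)$ time.

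Next, I would parse $w$ into $(u, j, x)$, consisting of a vertex identifier $u$ in $\log T$ bits, a leaf index $j$ in $\log T$ bits, and a candidate preimage $x \in \bits^n$, and have $V$ perform three sub-checks, rejecting if any triggers. (i) Leaf consistency: if $\cc{Eval}$ returns a non-$\bot$ value $h(2n{+}1, j)$, query the oracle and reject if $h(2n{+}1, j) \neq f_{[jn,(j+1)n)}$. (ii) Lex-smallestness at internal vertices: if $u$ is an internal vertex with $h(u) \neq \bot$ and $x < h(u)$, reject if $C(x) = h(c_L(u)) \circ h(c_R(u))$. (iii) No preimage at $u^*$: reject if $C(x) = h(c_L(u^*)) \circ h(c_R(u^*))$. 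Each sub-check uses a constant number of $\cc{Eval}$ calls plus one evaluation of $C$, so $V$ runs in $\poly(n)$.

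For the forward direction of correctness, the true $D_h$ satisfies every check by the specification of $\cc{Korten}'$: leaves in processed subtrees were initialized from $f$, each assigned internal vertex carries the lex-smallest preimage of its children, and termination at $u^*$ happened precisely because no preimage existed. For uniqueness, I plan a bottom-up induction inside the processed subtrees: constraint (i) forces every leaf under each $r \in R$ to match $f$; constraint (ii), applied level by level, inductively pins the value at every internal vertex in $r$'s subtree (and hence at $r$ itself) to be the unique lex-smallest preimage of the already-fixed children values. With all processed-subtree values pinned down, constraint (iii) forces $u^*$ to coincide with the unique vertex where the bottom-up lex-smallest chain first fails to produce a preimage, thereby making $D_h$ uniquely determined.

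The main obstacle I anticipate is ruling out a claimed $\widetilde{u^*}$ placed \emph{later} in post-order than the true termination vertex. I would handle this by noting that in such a case the true failing vertex lies inside the subtree of some $r$ in the claimed $R$, and the bottom-up lex-smallest computation at that descendant has no valid value, forcing any claimed $v_r$ to violate constraint (ii) at some witness $(u, x)$. The symmetric case ($\widetilde{u^*}$ too early) is caught by constraint (iii): the true algorithm's preimage at $\widetilde{u^*}$ serves as an $x$ that directly triggers rejection.
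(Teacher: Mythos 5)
Your construction matches the paper's: the same two-stage verifier (syntactic parsing of $\widetilde{D_h}$ into the path to $u^*$ and its proper left children, then local checks indexed by a universally quantified $w=(u,j,x)$ of length $2\log T+n$) with the same leaf-consistency, lex-minimality, and no-preimage-at-$u^*$ checks; the paper's two additional checks (parent--child consistency for $(i,j)<_P u^*$ and $\bot$ for $(i,j)\ge_P u^*$) are subsumed in your version because all values are resolved through $\cc{Eval}$, which enforces them by construction. Your uniqueness argument, including the case analysis on where $\widetilde{u^*}$ falls in post-order relative to the true termination vertex, is a correct and more explicit elaboration of the paper's one-line claim that the checks force every step of $\cc{Korten}'(C,f)$.
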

\begin{proof}
    The verifier proceeds in two parts. First part consists of parsing and checking whether $\widetilde{D_h}$ is indeed generated from a valid post-order traversal on the perfect binary tree. 
    
    In particular, it reads the terminating vertex $u^*$ based on the two children of $u^*$ stored in $\widetilde{D_h}$, finds the path from $(0,0)$ to $u^*$ and check that all proper left children of vertices on the path are included in $\widetilde{D_h}$ (and of course the right child of $u^*$ should be included). Also stored values of these vertices are non $\bot$.
    

    Upon passing the first part of the verification, we know $\widetilde{D_h}$ corresponds to some \emph{partially-assigned} perfect binary tree $\tilde{h}$ and it remains to check that $\tilde{h}$ is the computational history $\cc{Histree}(C,f)$. One should think of the verifier making at most $2^{|w|}$ checks and accepts only if all $2^{|w|}$ check passes. 
    
    The verifier $V$ needs to make the following checks. Note that whenever we need some value $v_{i,j}$, we will call $\cc{Eval}(\widetilde{D_h}, (i,j))$ for the value. Recall that the total order $<_P$ is defined according to the post-order traversal sequence. 
        \begin{enumerate}
            \item The values written on the leaves are indeed $f$. Hence, for every $j \in [0, 2^{2n+1}-1]$, check that $v_{2n+1, j}$ is consistent with the corresponding string in $f$.
            \item For every $(i,j) <_{P} u^*$, $C(v_{i,j}) = v_{i+1, 2j} \circ v_{i+1, 2j+1}$. (the values are consistent with the children)
            \item For every $(i,j) <_{P} u^*$, for every $x \in \bits^n$ that is lexicographically smaller than $v_{i,j}$, $C(x) \neq v_{i+1, 2j} \circ v_{i+1, 2j+1}$. (the lexicographically first requirement)
            \item Let $(i^*, j^*) = u^*$, then for every $x \in \bits^n$, $C(x) \neq v_{i^*+1, 2j^*} \circ v_{i^*+1, 2j^*+1}$. (the two children of $u^*$ form a non-image of $C$)
            \item For every $(i,j)$ where $u^* \le_{P} (i,j)$, $v_{i,j} = \bot$.
        \end{enumerate}
    Each of the above checks is local (requires assigned values of at most $3$ vertices) and efficient (runs in time $\poly(n, \log T)$). There are in total $O(T)$ vertices and therefore $O(T)$ tests, which can be implemented with a universal ($\forall$) quantification over at most $2 \log T + n$ bits. 
    
    Clearly the correct history $h$ (and therefore its unique description $D_h$) passes all these checks. Also these checks uniquely determine $h$ as they are essentially enforcing every step of execution of $\cc{Korten}'(C,f)$. 

\end{proof}

\section{Circuit Lower Bound for \texorpdfstring{$\cc{S_2E}$}{S2E}}
\subsection{Single-valued \texorpdfstring{$\fsigp$}{FSigma2P} Algorithm}

We start by showing a simple single-valued $\fsigp$ Algorithm for $\cc{Avoid}$.
\begin{theorem}
    There is a single-valued $\fsigp$ algorithm $A$: when given any circuit $C:\bits^n \rightarrow \bits^{2n}$ as input, $A(C)$ outputs $y_C$ such that $y_C \notin \Im(C)$ .
\end{theorem}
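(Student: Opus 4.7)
My plan is to package Algorithm \ref{alg:modified_Korten} together with the $\Pi_1$ verifier of Lemma \ref{lem:verify_history} into an $\fsigp$ machine, using a succinct description of the computational history as the existential witness $\pi_1$ and the $\Pi_1$ universal witness as $\pi_2$.

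First, I would fix $T := n\cdot 2^{2n+1}$ and take $f\in\bits^T$ to be the concatenation (in lexicographic order) of all $2n$-bit strings. Every $2n$-bit pattern then appears among the $2^{2n}$ consecutive length-$2n$ blocks of $f$, whereas the output of $\cc{GGM}_T[C]$ has at most $2^n$ distinct $n$-bit leaf values and therefore at most $2^n \ll 2^{2n}$ distinct $2n$-bit leaf-pair patterns. Hence $f\notin\Im(\cc{GGM}_T[C])$ for every circuit $C:\bits^n\to\bits^{2n}$, so $\cc{Korten}'(C,f)$ successfully terminates with a canonical non-image $y_C$ of $C$, and the computational history $h := \cc{Histree}(C,f)$ admits the unique short description $D_h$ guaranteed by Lemma \ref{lem:short_description}.

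The $\fsigp$ verifier $V_A(C,\pi_1,\pi_2)$ now proceeds as follows: parse $\pi_1$ as a candidate description $\widetilde{D_h}$ of length $O(n \log T)$ and $\pi_2$ as a candidate witness $w\in\bits^{2\log T + n}$, then run the verifier $V$ of Lemma \ref{lem:verify_history} on $(C,\widetilde{D_h},w)$. Oracle queries into $f$ are answered without any external oracle, since any position $f_i$ is computable in $\poly(n)$ time from the explicit definition of $f$ (look up the appropriate $2n$-bit block and select a bit). If $V$ accepts, read off the two children of the terminating vertex $u^*$ from $\widetilde{D_h}$ and output their concatenation; otherwise output $\bot$. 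The whole computation clearly runs in $\poly(n)$ time.

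To verify single-valuedness: setting $\pi_1 := D_h$, Lemma \ref{lem:verify_history} gives $V^f(C,D_h,w)=1$ for every $w$, so $V_A$ outputs $y_C$ on every $\pi_2$, establishing the existential condition. Conversely, for any $\pi_1$, either $\pi_1 = D_h$ (in which case every $\pi_2$ again yields $y_C$) or $\pi_1 \neq D_h$, in which case the uniqueness clause of Lemma \ref{lem:verify_history} supplies some $w^*$ on which $V$ rejects, so $\pi_2 := w^*$ makes $V_A$ output $\bot$. All the heavy lifting---short-describing the computational history and verifying it with a single $\forall$-quantifier---has already been done in Lemmas \ref{lem:short_description} and \ref{lem:verify_history}, so the theorem follows by this repackaging and I do not expect any remaining obstacle; the only slightly delicate point is observing that the ``oracle'' $f$ in Lemma \ref{lem:verify_history} is really a hardwired function here and need not be provided externally.
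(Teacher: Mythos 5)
Your proposal is correct and is essentially identical to the paper's own proof: both instantiate $f$ as the concatenation of all $2n$-bit strings (computable on the fly, so no real oracle is needed), take $\pi_1 = D_h$ and $\pi_2 = w$, and invoke the verifier of Lemma \ref{lem:verify_history}, with the uniqueness clause giving the second condition of the single-valued $\fsigp$ definition. Your explicit justification that $f \notin \Im(\cc{GGM}_T[C])$ (via counting aligned $2n$-bit blocks against $|\Im(C)| \le 2^n$) is a detail the paper leaves implicit.
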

\begin{proof}
    On input a circuit $C:\bits^n \rightarrow \bits^{2n}$, let $T = 2n \cdot 2^{2n}$ and $f \in \bits^{T}$ be the concatenation of all $2n$-length bit strings. Let $h = \cc{Histree}(C,f)$. $V_A(C, \pi_1, \pi_2)$ is defined as follows: it parses $\pi_1$ as $D_h$ and $\pi_2$ as $w$, simulates the verifier $V^f(C, D_h, w)$ in \cref{lem:verify_history}. It outputs the non-image of $C$ stored in $D_h$ iff $V^f(C, D_h, w) = 1$. Otherwise it outputs $\bot$.
    
    
    Note that every position of $f$ can be easily computed since it is just enumerating all $2n$-length strings. Hence the simulation can be done in polynomial time.
\end{proof}

\subsection{Single-valued \texorpdfstring{$\fsp$}{FS2P} Algorithm}

In order to generalise the $\fsigp$ algorithm above to a $\fsp$ algorithm, we need a `selector' that chooses the correct $D_h$ when two candidates are given. We formalise such selector in the following lemma.

\begin{lemma}\label{lem:selector}
    Let $n, T\in \N$ be such that $T = 2n \cdot 2^{2n}$. Let $C:\bits^n \rightarrow \bits^{2n}$ be a circuit, and $f\in \bits^T$. Let $h:=\cc{Histree}(C,f)$ and $D_h$ be the succinct description of $h$ defined in \cref{lem:short_description}. Given $f$ as an oracle and two strings $\pi_1, \pi_2$ as additional input, with the promise that $\pi_i = D_h$ for at least one $i \in \{1,2\}$, there is a deterministic algorithm $S$ such that $S^f(C, \pi_1, \pi_2) = \pi_i$ and runs in time $\poly(n) \cdot \log T$.
\end{lemma}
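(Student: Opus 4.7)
The plan is to turn the two candidate descriptions into concrete disagreements at tree vertices and then exploit the fact that, at the leaves, the correct value is unambiguous because it is supplied by the oracle $f$.

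First, if $\pi_1=\pi_2$ as bit strings, return either one. Otherwise, parse both descriptions; each consists of a claimed terminating vertex $u^*_i$ together with $O(n)$ (vertex, $n$-bit value) pairs, by \cref{lem:short_description}. If the two parsed descriptions are structurally identical (same $u^*$ and same stored values on the same vertex set), they describe the same partially-assigned tree and we return either. Otherwise, by iterating over the $O(n)$ stored vertices of $\pi_1$ and $\pi_2$ and comparing $\cc{Eval}(\pi_1,u)$ with $\cc{Eval}(\pi_2,u)$ using \cref{lem:GGM_eval}, we locate in $\poly(n)\cdot\log T$ time a vertex $u$ at which the two evaluations differ.

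Second, starting from this dispute vertex $u$, we descend toward the leaves while maintaining the invariant that $\cc{Eval}(\pi_1,w)\neq\cc{Eval}(\pi_2,w)$ at the current vertex $w$. If $w$ is a leaf, we read the corresponding block of $f$ via the oracle and return the $\pi_i$ whose evaluation matches; this is correct because for the honest description the leaves of $\cc{Histree}(C,f)$ are exactly $f$. If $w$ is internal, let $v_i=\cc{Eval}(\pi_i,w)$. If some child of $w$ still exhibits a disagreement, descend to it. The only way descent can get stuck is if both children agree between the two descriptions while $v_1\neq v_2$. This case is resolved locally using the promise that one $\pi_i$ equals $D_h$: if exactly one of $v_1,v_2$ is $\bot$, then the subtree below $w$ has fully determined GGM-leaf values identical in both, so a single leaf comparison against $f$ picks out the correct side; if both are non-$\bot$, then $C(v_1)$ and $C(v_2)$ must coincide on both halves (otherwise a child would have disagreed), so $v_1$ and $v_2$ are both preimages of the same $2n$-bit string, and by the lexicographically-first rule enforced by $\cc{Korten}'$ the correct description is the one with the lex-smaller $v_i$.

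Third, the descent traverses at most $2n+1$ levels, and at each level we perform $O(1)$ calls to $\cc{Eval}$, $O(1)$ oracle queries to $f$, and constant-many evaluations of $C$, for a total of $\poly(n)\cdot\log T$ time.

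The main obstacle I expect is the stuck-descent case above: we must argue that the one promised-correct $\pi_i$ is always the one preferred by the local rule. This reduces to carefully tracking where $w$ sits relative to the true $u^*$ (is $w$ strictly before $u^*$ in post-order, equal to $u^*$, on the path to $u^*$, or strictly after) and checking in each sub-case that either the $\bot$-vs-non-$\bot$ test or the lex-smaller test correctly identifies $D_h$. Once that case analysis is in place the running time bound is immediate from the $O(\log T)$ depth of the GGM tree and \cref{lem:GGM_eval}.
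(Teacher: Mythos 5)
Your overall route is the same as the paper's: find one vertex where the two reconstructed trees disagree, descend while maintaining a disagreement, and break ties at the leaves via $f$ or at internal vertices via the lexicographically-first rule. The leaf case, the ``both non-$\bot$ with the same image'' case, and the running-time bound are all fine. The gap is in the other branch of your ``stuck'' case, which you correctly flag as the crux but then resolve incorrectly.

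Concretely, suppose at the current vertex $w$ you have $v_1=\bot$, $v_2\neq\bot$, and both children of $w$ evaluate identically under $\pi_1$ and $\pi_2$. This situation is reachable: take $\pi_2=D_h$, and let $\pi_1$ claim a terminating vertex $u^*_1<_P u^*$ whose two stored ``children'' values are exactly the two halves of $C(h(u^*_1))$; the first disagreement you locate is then at $w=u^*_1$ and it is already stuck. You propose a ``single leaf comparison against $f$'', but as you yourself note, every leaf below $w$ evaluates to the same string under both descriptions, so the comparison returns the same verdict for both candidates and selects nothing. The correct tie-break is to select the candidate with the non-$\bot$ value, and this is exactly where the promise is used: if the candidate assigning $\bot$ to $w$ were the honest $D_h$, then either (i) $w=u^*$, so in $h$ the children of $w$ concatenate to a non-image of $C$, contradicting that they equal $C(v_2)\in\Im(C)$; or (ii) $w$ is a proper ancestor of $u^*$, so one child of $w$ is on the path to $u^*$ and hence $\bot$ in $h$, while under the other candidate both children are non-$\bot$, contradicting their agreement; or (iii) $w>_P u^*$ and $w$ is not an ancestor of $u^*$, so the entire subtree of $w$ (children included) is $\bot$ in $h$, again contradicting agreement. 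Hence in the stuck one-$\bot$ case the non-$\bot$ side must be $D_h$. With that replacement your argument goes through and matches the paper's proof in all other respects.
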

\begin{proof}
    $S$ starts by parsing $\pi_1, \pi_2$ as $D_h$. If any of them fail to parse (i.e. the vertices are not derived from a post-order traversal), $S$ simply discards it and output the other one.
    
    Let $h_1$ and $h_2$ be the corresponding perfect binary tree generated from $\pi_1$ and $\pi_2$. Let $(i^*_1, j^*_1)$ be the termination vertex in $h_1$ and $(i^*_2, j^*_2)$ be the termination vertex in $h_2$. $S$ will efficiently find a single vertex that contains different stored values in $h_1$ and $h_2$. In particular, we consider two cases:
    \begin{enumerate}
        \item $(i^*_1, j^*_1) \neq (i^*_2, j^*_2)$: Without loss of generality, we may assume $(i^*_1, j^*_1) <_P (i^*_2, j^*_2)$. Then we know that $h_2(i^*_1, j^*_1)\neq \bot$. 
        \item $(i^*_1, j^*_1) = (i^*_2, j^*_2)$: Then they store the same set of vertices in $\pi_1$ and $\pi_2$, and one of the vertex must have a different stored value. 
    \end{enumerate}

    Now given a vertex (say $u$) where $h_1(u) \neq h_2(u)$, $S$ proceeds as follows:
    \begin{enumerate}
        \item If $u$ is a leaf vertex, then $S$ checks it against $f$ and decides which is the correct $D_h$.
        \item Otherwise, $S$ checks if $C(h_1(u)) = C(h_2(u))$. If they are indeed pre-images of the same value, $S$ picks $\pi_i$ for $i\in \{1,2\}$ such that $h_i(u)$ is the lexicographically smaller one. 
        
        If $C(h_1(u)) \neq C(h_2(u))$ or if $h_2(u) \neq h_1(u) = \bot$, then at least one of $u$'s children (say $u'$) should have a different stored value. $S$ then repeats the whole procedure on $u'$.
    \end{enumerate}
    
    It is clear from the description that $S$ terminates in $O(\log T)$ recursive steps, and the overall running time is $\poly(n) \cdot \log T$.
\end{proof}

\begin{theorem}\label{thm:n_2n_fsp}
    There is a single-valued $\fsp$ algorithm $A$: when given any circuit $C:\bits^n \rightarrow \bits^{2n}$ as input, $A(C)$ outputs $y_C$ such that $y_C \notin \Im(C)$.
\end{theorem}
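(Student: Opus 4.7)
The plan is to lift the $\fsigp$ algorithm of the previous subsection to an $\fsp$ algorithm by invoking the selector of \cref{lem:selector} in place of the $\Pi_1$ verifier. As before, fix $T = 2n \cdot 2^{2n}$ and let $f \in \bits^T$ be the canonical concatenation of all $2n$-bit strings, so that any bit of $f$ is computable in $\poly(n)$ time. Let $h := \cc{Histree}(C, f)$ and let $D_h$ be its succinct description from \cref{lem:short_description}; recall that $D_h$ directly stores the canonical non-image $y_C$ as the pair of children of the termination vertex $u^*$.

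I would define the $\fsp$ verifier $V_A(C, \pi_1, \pi_2)$ as follows. Interpret each $\pi_i$ as a candidate description of $h$. Run the selector $S^f(C, \pi_1, \pi_2)$ of \cref{lem:selector}, answering each oracle query to $f$ by direct computation in $\poly(n)$ time. The selector returns one of the two candidates, and $V_A$ reads off the stored non-image from that candidate using $\cc{Eval}$ (from \cref{lem:short_description}) and outputs it.

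Correctness follows immediately from the selector's guarantee. In each of the two existential clauses of the $\fsp$ definition, designate $D_h$ as the honest witness. If $\pi_1 = D_h$, then for every $\pi_2$ the selector is invoked with at least one correct input, so by \cref{lem:selector} it returns $D_h$, and hence $V_A$ outputs $y_C$; the symmetric case with honest $\pi_2$ is identical. The total running time is $\poly(n)$, since the selector runs in $\poly(n)\cdot\log T = \poly(n)$ time and each simulated oracle query to $f$ costs $\poly(n)$.

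The main conceptual point is that the symmetric existential structure of $\fsp$ matches exactly what a selector provides: either prover can unilaterally fix the output to $y_C$ by supplying $D_h$, and the selector's promise that at least one of its two inputs equals $D_h$ is precisely what honesty of one prover supplies. There is no serious obstacle beyond \cref{lem:short_description} and \cref{lem:selector}; the only minor subtlety is simulating the oracle $f$ inside a polynomial-time verifier, which is immediate because $f$ is a fixed, explicit string.
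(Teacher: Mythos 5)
Your proposal is correct and follows essentially the same route as the paper: fix $T = 2n\cdot 2^{2n}$ and the explicit hard string $f$, run the selector of \cref{lem:selector} on the two candidate descriptions (simulating oracle access to $f$ by direct computation), and output the non-image stored in the returned $D_h$, with the selector's promise discharged by the honest prover in each clause of the $\fsp$ definition. No gaps; your writeup is, if anything, slightly more explicit than the paper's about why both existential clauses are satisfied.
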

\begin{proof}
    On input a circuit $C:\bits^n \rightarrow \bits^{2n}$, let $T = 2n \cdot 2^{2n}$ and $f \in \bits^{T}$ be the concatenation of all $2n$-length bit strings. Let $h = \cc{Histree}(C,f)$. $V_A(C, \pi_1, \pi_2)$ is defined as follows: it applies the selector algorithm $S^f(C, \pi_1, \pi_2)$ from \cref{lem:selector} and obtains $\pi_i = D_h$. 
    It then outputs the non-image of $C$ stored in $\pi_i$.
    
    
    Note that every position of $f$ can be easily computed since it is just enumerating all $2n$-length strings. Hence the simulation can be done in polynomial time.
\end{proof}

\subsection{Circuit Lower Bound}
Before we get to our circuit lower bound, we need a few results from \cite{Korten,CHR23}:

\begin{theorem}\cite[Theorem 2.3]{CHR23}\label{thm:post_process}
    Let $A(x)$ be a single-valued $\fsp$ algorithm and $B(x,y)$ be an $\cc{FP}^{\NP}$ algorithm, both with fixed output length. The function $f(x):=B(x,A(x))$ also admits an $\fsp$ algorithm.
\end{theorem}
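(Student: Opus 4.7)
My plan is to compose the $\fsp$ verifier for $A$ with a symmetric-alternation simulation of $B$'s NP oracle. Each of the two proofs $\pi_1, \pi_2$ will carry three pieces: a sub-proof $\pi_i^A$ intended for the single-valued $\fsp$ algorithm computing $A(x)$; a claimed transcript $(a_{i,1}, \ldots, a_{i,t})$ of yes/no answers to the polynomially many adaptive oracle queries that $B(x, A(x))$ issues; and, for every query $j$ with $a_{i,j}$ set to ``yes'', a claimed NP-witness $w_{i,j}$.

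The verifier $V_f(x, \pi_1, \pi_2)$ I would build runs in two stages. Stage one invokes $V_A(x, \pi_1^A, \pi_2^A)$ to obtain a candidate $y'$; if this returns $\bot$ the verifier outputs $\bot$. Stage two simulates $B(x, y')$ step by step; at the $j$-th oracle query $q_j$ it declares the answer to be ``yes'' iff at least one of $w_{1,j}, w_{2,j}$ is a valid NP-witness for $q_j$, and ``no'' otherwise. The final output is whatever $B$ outputs on that simulated trace. Since $B$ is $\cc{FP}^{\NP}$ and $V_A$ is polynomial-time, the verifier is polynomial time.

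For correctness I would fix the honest strategy: Player $1$ supplies the ``forcing'' $\pi_1^{A,*}$ guaranteed by the single-valuedness of $A$, together with the true transcript of $B(x, y_x)$ and correct NP-witnesses for all its true ``yes'' queries. For any adversarial $\pi_2$, stage one is forced to output $y' = y_x$; in stage two each true ``yes'' query is certified by the honest witness, while no valid witness can exist for a true ``no'' query, so the trust-any-witness rule reconstructs the exact oracle trace and hence outputs $f(x) = B(x, y_x)$. The argument when Player $2$ is honest is symmetric, so both quantifier orders force the same canonical output $f(x)$, giving a single-valued $\fsp$ algorithm.

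The main obstacle to guard against is an opponent derailing stage one: were $V_A$ allowed to output $\bot$ or a different value on the honest player's proof, stage two would simulate $B$ on the wrong input and the output would not even be fixed. The single-valued $\fsp$ promise for $A$ --- that some $\pi_1^{A,*}$ and some $\pi_2^{A,*}$ each force the output to be exactly $y_x$ against every opposing move --- rules this out and is exactly why the hypothesis is single-valued $\fsp$ rather than a weaker $\cc{S_2P}$-style relation. Once stage one is secured, stage two is the textbook trick for absorbing an NP oracle into a symmetric game, so the only nontrivial input to the argument is the single-valuedness of $A$.
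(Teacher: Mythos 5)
Your proposal is correct and is essentially the argument the paper relies on: the theorem is imported from \cite[Theorem 2.3]{CHR23} without a reproof here, and the proof there is exactly this composition --- use the single-valued $\fsp$ witness to force $A$'s canonical output $y_x$, then resolve $B$'s adaptive $\NP$ queries by declaring ``yes'' precisely when either player exhibits a valid witness, which reconstructs the true oracle trace whenever at least one player is honest (a false ``yes'' is impossible since no witness exists, and a false ``no'' is overridden by the honest player's witness).
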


\begin{lemma}\cite[Lemma 3]{Korten}\label{lem:stretch_increase}
    Let $n\in \N$. There is a polynomial time algorithm $A$ and an $\cc{FP}^{\NP}$ algorithm $B$ such that the following holds:
    \begin{enumerate}
        \item Given a circuit $C:\bits^n \rightarrow \bits^{n+1}$, $A(C)$ outputs a circuit $D:\bits^n \rightarrow \bits^{2n}$.
        \item Given any $y\in \bits^{2n} \backslash \Im(D)$, $B(C,y)$ outputs a string $z \in \bits^{n+1} \backslash \Im(C)$.
    \end{enumerate}
\end{lemma}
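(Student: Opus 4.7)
The plan is to build $D$ by iterating $C$ in a chain, squeezing one extra output bit out of each application of $C$ while passing $n$ bits of state forward. Concretely, on input $x \in \bits^n$, I set $x_0 := x$, and for $i = 1, \ldots, n$ I parse $C(x_{i-1}) = b_i \circ x_i$ with $b_i \in \bits$ and $x_i \in \bits^n$. Then I define
\[ D(x) := b_1 b_2 \cdots b_n \circ x_n \in \bits^{2n}. \]
The circuit $D$ has size $O(n \cdot |C|)$ and is trivially produced by a polynomial-time algorithm $A$.

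For $B$: given $y \in \bits^{2n} \setminus \Im(D)$, first parse $y = b_1 b_2 \cdots b_n \circ w$ with $b_i \in \bits$ and $w \in \bits^n$. I then attempt to invert the chain from the rightmost step. Set $x_n := w$; for $i = n, n-1, \ldots, 1$, I use the $\NP$ oracle to decide whether some $x_{i-1} \in \bits^n$ satisfies $C(x_{i-1}) = b_i \circ x_i$, and if so, extract such an $x_{i-1}$ (for definiteness, the lexicographically smallest one via standard prefix search). If at some step $i$ no such $x_{i-1}$ exists, $B$ halts and returns $z := b_i \circ x_i \in \bits^{n+1}$.

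Correctness is a clean win-win. If the inversion ever fails at step $i$, the failing query certifies $b_i \circ x_i \notin \Im(C)$, so $z$ is valid. If the inversion never fails, I recover $x_0, x_1, \ldots, x_n$ with $C(x_{i-1}) = b_i \circ x_i$ for every $i$, which unrolls by the definition of $D$ to $D(x_0) = b_1 \cdots b_n \circ x_n = y$, contradicting $y \notin \Im(D)$. Hence failure at some step is guaranteed. The algorithm makes $O(n)$ oracle calls (one satisfiability check plus one prefix search per level), so $B \in \cc{FP}^{\NP}$.

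I do not foresee a real obstacle here; the only design choice that needs care is arranging the chain so that each application of $C$ contributes exactly one new output bit while forwarding exactly $n$ state bits. This ensures $D$ has output length precisely $2n$ and that each inversion step reduces to the crisp question ``does any $n$-bit string map under $C$ to this specific $(n+1)$-bit target?''---a problem settled by a single $\NP$ query. Everything else is bookkeeping.
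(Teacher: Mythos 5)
Your construction is correct and is essentially the standard proof of Korten's Lemma~3, which this paper cites without reproving: the chain composition of $C$ that emits one fresh bit per application while forwarding $n$ bits of state, together with the backward inversion via the $\NP$ oracle whose guaranteed failure yields the non-image of $C$. The win-win argument and the $O(n)$ oracle-call accounting are both sound.
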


\begin{definition}\cite[Section 2.3]{CHR23}
     For $n, s\in \N$ where $n \leq s \leq 2^n$, the truth table generator circuit $\cc{TT}_{n,s}:\bits^{L_{n,s}} \rightarrow \bits^{2^n}$ maps a stack program of description size $L_{n,s} = (s+1)(7+\log(n+s))$ into its truth table. Moreover, such circuit can be uniformly constructed in time $\poly(2^n)$.
\end{definition}


The following corollary follows from \cref{thm:n_2n_fsp,thm:post_process} and 
\cref{lem:stretch_increase}.

\begin{corollary}\label{cor:main}
    There is a single-valued $\fsp$ algorithm $A$: when given any circuit $C:\bits^n \rightarrow \bits^{n+1}$ as input, $A(C)$ outputs $y_C$ such that $y_C \notin \Im(C)$ .
\end{corollary}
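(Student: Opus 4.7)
The plan is to combine the three ingredients mechanically: Lemma~\ref{lem:stretch_increase} reduces $\cc{Avoid}$ on circuits with one-bit stretch to $\cc{Avoid}$ on circuits with $n$-bit stretch (which is exactly what Theorem~\ref{thm:n_2n_fsp} solves), and Theorem~\ref{thm:post_process} lets us glue an $\cc{FP}^{\NP}$ post-processing step on top of a single-valued $\fsp$ algorithm without leaving $\fsp$.

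Concretely, given the input circuit $C:\bits^n \rightarrow \bits^{n+1}$, I would first apply the deterministic polynomial-time algorithm from Lemma~\ref{lem:stretch_increase} to compute the stretched circuit $D := A(C) : \bits^n \rightarrow \bits^{2n}$. Since this preprocessing is deterministic, composing it with the single-valued $\fsp$ algorithm of Theorem~\ref{thm:n_2n_fsp} yields a single-valued $\fsp$ algorithm $A'$ that on input $C$ outputs a canonical $y_D \in \bits^{2n} \setminus \Im(D)$; the output length $2n$ is fixed given $|C|$, as required by Theorem~\ref{thm:post_process}. Next, I would take the $\cc{FP}^{\NP}$ algorithm $B(C,y)$ from Lemma~\ref{lem:stretch_increase}, which on any $y \in \bits^{2n}\setminus \Im(D)$ produces some $z \in \bits^{n+1}\setminus \Im(C)$, and consider the composite function $f(C) := B(C, A'(C))$.

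Invoking Theorem~\ref{thm:post_process} with the single-valued $\fsp$ algorithm $A'$ and the $\cc{FP}^{\NP}$ algorithm $B$ immediately gives a single-valued $\fsp$ algorithm computing $f$, and the correctness guarantee of Lemma~\ref{lem:stretch_increase} ensures that $f(C) \notin \Im(C)$. Setting $y_C := f(C)$ finishes the proof.

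There isn't really a hard step here: all the technical content has been absorbed into Theorem~\ref{thm:n_2n_fsp} (the modified Korten reduction plus selector), Theorem~\ref{thm:post_process}, and Lemma~\ref{lem:stretch_increase}. The only thing worth double-checking is that the output length of $A'$ is a fixed polynomial in $|C|$ so that Theorem~\ref{thm:post_process} applies; this is clear because $D$ has a fixed input width $n$ determined by $|C|$ and $A'$ always returns a string of length $2n$ (or $\bot$ on rejecting paths, in which case the verifier is not required to produce $y_C$).
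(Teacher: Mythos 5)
Your proposal is correct and matches the paper's intended argument exactly: the paper simply states that the corollary follows from \cref{thm:n_2n_fsp}, \cref{thm:post_process}, and \cref{lem:stretch_increase}, and your write-up spells out precisely that composition (deterministic preprocessing $C \mapsto D$, the single-valued $\fsp$ solver for $D$, then the $\cc{FP}^{\NP}$ post-processing $B$, with the fixed-output-length hypothesis checked). No gaps.
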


\begin{corollary}
    $\cc{S_2E} \not\subset i.o.$-$\cc{SIZE}[2^n/n]$.
\end{corollary}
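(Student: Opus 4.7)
The plan is to apply the single-valued $\fsp$ algorithm for $\cc{Avoid}$ given by Corollary \ref{cor:main} to the canonical truth-table generator circuit $\cc{TT}_{n,s(n)}$ at each input length $n$, and then to read off bits of the resulting truth table inside an $\cc{S_2E}$ machine.

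First I would set $s(n) := \lfloor 2^n/n \rfloor$ and verify that $\cc{TT}_{n,s(n)} : \bits^{L_{n,s(n)}} \to \bits^{2^n}$ is genuinely expanding. Using $L_{n,s} = (s+1)(7 + \log(n+s))$ together with $\log(n + 2^n/n) = n - \log n + o(1)$, one gets $L_{n,s(n)} \le (2^n/n + 1)(n - \log n + O(1)) = 2^n - \Theta\!\left(2^n \log n / n\right) < 2^n$ for all sufficiently large $n$. Hence $\cc{TT}_{n,s(n)}$ is a valid $\cc{Avoid}$ instance whose non-images are precisely the truth tables of $n$-variable Boolean functions requiring circuit size strictly larger than $s(n) = \Omega(2^n/n)$.

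Next, I would feed $\cc{TT}_{n,s(n)}$, which has $\poly(2^n)$ gates and is uniformly constructible in time $\poly(2^n)$, into the single-valued $\fsp$ algorithm $A$ of Corollary \ref{cor:main} to obtain a canonical non-image $y_n \in \bits^{2^n}$. I then define $L$ to be the language whose characteristic vector on $\bits^n$ is $y_n$; by construction, $L \cap \bits^n \notin \cc{SIZE}[s(n)]$ at every large enough $n$, which is strictly stronger than $L \notin i.o.\text{-}\cc{SIZE}[2^n/n]$. To place $L$ in $\cc{S_2E}$, on input $x \in \bits^n$ with proofs $\pi_1, \pi_2 \in \bits^{\poly(2^n)}$ the verifier deterministically constructs $\cc{TT}_{n,s(n)}$, runs the polynomial-time $\fsp$ verifier $V_A$ of $A$ on this circuit with $(\pi_1, \pi_2)$, and returns the $x$-th bit of its $2^n$-bit output, all within time $2^{O(n)}$. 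The two clauses of the $\fsp$ definition translate directly into the $\cc{S_2TIME}[2^{O(n)}]$ format: whichever of $(y_n)_x = 1$ or $(y_n)_x = 0$ holds, the corresponding prover has a witness that forces $V_A$ to output $y_n$ against every adversarial response, and hence forces the $\cc{S_2E}$ verifier to output the correct bit.

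The only nontrivial calculation is the expansion bound $L_{n,s(n)} < 2^n$, which is exactly what pins down the $2^n/n$ threshold in the theorem; the remainder is a routine unpacking of Corollary \ref{cor:main} and the $\cc{S_2E}$ definition. I do not foresee a serious obstacle — the single-valued guarantee of $A$ does all the work in converting a search-problem algorithm into a decision-problem machine with consistent answers across all input lengths.
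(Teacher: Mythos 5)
Your proposal is correct and takes essentially the same route as the paper's own proof sketch: instantiate the single-valued $\fsp$ algorithm on the truth-table generator $\cc{TT}_{n,s}$ with $s = 2^n/n$, verify $L_{n,s} < 2^n$ so that it is a valid $\cc{Avoid}$ instance, and read off bits of the canonical non-image inside an $\cc{S_2TIME}[2^{O(n)}]$ verifier, using the two clauses of the single-valued $\fsp$ definition to force the correct bit. (One cosmetic point: the non-images of $\cc{TT}_{n,s}$ are the truth tables not computed by any length-$L_{n,s}$ stack program, a set that \emph{contains} rather than equals the truth tables of circuit complexity exceeding $s$; only that one inclusion is needed.)
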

\begin{proof}[Proof Sketch]
    Let $A$ be the single-valued algorithm from \cref{cor:main} and set $s := 2^n / n$. Define the language $\mathcal{L}$ such that the truth table of the characteristic function of $\mathcal{L} \cap \bits^n$ is $A(\cc{TT}_{n,s})$. By our choice of $s$, $L_{n,s} = (s+1)(7+\log(n+s)) < 2^n$ and hence $\cc{TT}_{n,s}$ is a valid $\cc{Avoid}$ instance.
    
    $\mathcal{L} \notin i.o.$-$\cc{SIZE}[s(n)]$ since any $s$-size $n$-input circuit $C$ can be encoded into a stack program of size $L_{n,s}$ bits \cite{FM05}.
    
    $\mathcal{L} \in \cc{S_2E}$ since one can compute the truth table using algorithm $A$.
\end{proof}


\begin{remark}
    Similar to \cite{CHR23}, it is not hard to verify that all our results above relativise.
\end{remark}

\section*{Acknowledgements}
The author was supported by the NUS-NCS Joint Laboratory for Cyber Security, Singapore.

The author would like to thank Eldon Chung, Karthik Gajulapalli, Alexander Golovnev, Sidhant Saraogi and Noah Stephens-Davidowitz for inspiring discussions. 

The author would also like to thank Divesh Aggarwal, Lijie Chen, Alexander Golovnev, Hanlin Ren and Sidhant Saraogi for their comments on early drafts of this work. 

The author would also like to thank anonymous STOC reviewers for helpful comments.

\bibliographystyle{alpha}
\bibliography{biblio}

\begin{thebibliography}{KKMP21}

\bibitem[AB09]{AB09}
Sanjeev Arora and Boaz Barak.
\newblock {\em Computational Complexity: A Modern Approach}.
\newblock Cambridge University Press, USA, 1st edition, 2009.

\bibitem[Cai07]{Cai07}
Jin-Yi Cai.
\newblock $\cc{S}^p_2 \subseteq \cc{ZPP}^{\NP}$.
\newblock {\em Journal of Computer and System Sciences}, 73, 02 2007.

\bibitem[CHLR23]{CHLR23}
Yeyuan Chen, Yizhi Huang, Jiatu Li, and Hanlin Ren.
\newblock Range avoidance, remote point, and hard partial truth table via satisfying-pairs algorithms.
\newblock In {\em Proceedings of the 55th Annual ACM Symposium on Theory of Computing}, STOC 2023, page 1058–1066, New York, NY, USA, 2023. Association for Computing Machinery.

\bibitem[CHR24]{CHR23}
Lijie Chen, Shuichi Hirahara, and Hanlin Ren.
\newblock Symmetric exponential time requires near-maximum circuit size.
\newblock In {\em Proceedings of the 56th Annual ACM SIGACT Symposium on Theory of Computing}, STOC 2024, to appear. Association for Computing Machinery, 2024.

\bibitem[CT22]{CT21}
Lijie Chen and Roei Tell.
\newblock Hardness vs randomness, revised: Uniform, non-black-box, and instance-wise.
\newblock In {\em 2021 IEEE 62nd Annual Symposium on Foundations of Computer Science (FOCS)}, pages 125--136, 2022.

\bibitem[CZ19]{CZ19}
Eshan Chattopadhyay and David Zuckerman.
\newblock {Explicit two-source extractors and resilient functions}.
\newblock {\em Annals of Mathematics}, 189(3):653 -- 705, 2019.

\bibitem[FM05]{FM05}
Gudmund~Skovbjerg Frandsen and Peter~Bro Miltersen.
\newblock Reviewing bounds on the circuit size of the hardest functions.
\newblock {\em Information Processing Letters}, 95(2):354--357, 2005.

\bibitem[GG11]{GG11}
Erann Gat and Shafi Goldwasser.
\newblock Probabilistic search algorithms with unique answers and their cryptographic applications.
\newblock {\em Electron. Colloquium Comput. Complex.}, TR11, 2011.

\bibitem[GGM86]{GGM86}
Oded Goldreich, Shafi Goldwasser, and Silvio Micali.
\newblock How to construct random functions.
\newblock {\em J. ACM}, 33(4):792–807, aug 1986.

\bibitem[GGNS23]{GGNS23}
Karthik Gajulapalli, Alexander Golovnev, Satyajeet Nagargoje, and Sidhant Saraogi.
\newblock Range avoidance for constant depth circuits: Hardness and algorithms.
\newblock In Nicole Megow and Adam~D. Smith, editors, {\em Approximation, Randomization, and Combinatorial Optimization. Algorithms and Techniques, {APPROX/RANDOM} 2023, September 11-13, 2023, Atlanta, Georgia, {USA}}, volume 275 of {\em LIPIcs}, pages 65:1--65:18. Schloss Dagstuhl - Leibniz-Zentrum f{\"{u}}r Informatik, 2023.

\bibitem[GLW22]{GLW22}
Venkatesan Guruswami, Xin Lyu, and Xiuhan Wang.
\newblock Range avoidance for low-depth circuits and connections to pseudorandomness.
\newblock In Amit Chakrabarti and Chaitanya Swamy, editors, {\em Approximation, Randomization, and Combinatorial Optimization. Algorithms and Techniques, {APPROX/RANDOM} 2022, September 19-21, 2022, University of Illinois, Urbana-Champaign, {USA} (Virtual Conference)}, volume 245 of {\em LIPIcs}, pages 20:1--20:21. Schloss Dagstuhl - Leibniz-Zentrum f{\"{u}}r Informatik, 2022.

\bibitem[Gol08]{Gol08}
Oded Goldreich.
\newblock Computational complexity: A conceptual perspective.
\newblock {\em SIGACT News}, 39(3):35–39, sep 2008.

\bibitem[HNOS96]{HNOS96}
Lane~A. Hemaspaandra, Ashish~V. Naik, Mitsunori Ogihara, and Alan~L. Selman.
\newblock Computing solutions uniquely collapses the polynomial hierarchy.
\newblock {\em SIAM Journal on Computing}, 25(4):697--708, 1996.

\bibitem[ILW23]{ILW23}
Rahul Ilango, Jiatu Li, and R.~Ryan Williams.
\newblock Indistinguishability obfuscation, range avoidance, and bounded arithmetic.
\newblock In {\em Proceedings of the 55th Annual ACM Symposium on Theory of Computing}, STOC 2023, page 1076–1089, New York, NY, USA, 2023. Association for Computing Machinery.

\bibitem[IW97]{IW97}
Russell Impagliazzo and Avi Wigderson.
\newblock P = bpp if e requires exponential circuits: Derandomizing the xor lemma.
\newblock In {\em Proceedings of the Twenty-Ninth Annual ACM Symposium on Theory of Computing}, STOC '97, page 220–229, New York, NY, USA, 1997. Association for Computing Machinery.

\bibitem[Kan82]{Kannan82}
Ravindran Kannan.
\newblock Circuit-size lower bounds and non-reducibility to sparse sets.
\newblock {\em Information and Control}, 55(1):40--56, 1982.

\bibitem[KKMP21]{KKMP21}
Robert Kleinberg, Oliver Korten, Daniel Mitropolsky, and Christos Papadimitriou.
\newblock {Total Functions in the Polynomial Hierarchy}.
\newblock In James~R. Lee, editor, {\em 12th Innovations in Theoretical Computer Science Conference (ITCS 2021)}, volume 185 of {\em Leibniz International Proceedings in Informatics (LIPIcs)}, pages 44:1--44:18, Dagstuhl, Germany, 2021. Schloss Dagstuhl--Leibniz-Zentrum f{\"u}r Informatik.

\bibitem[Kor21]{Korten}
Oliver Korten.
\newblock The hardest explicit construction.
\newblock In {\em 2021 IEEE 62nd Annual Symposium on Foundations of Computer Science (FOCS)}, pages 433--444, 2021.

\bibitem[Li23]{Li23}
Xin Li.
\newblock Two source extractors for asymptotically optimal entropy, and (many) more.
\newblock In {\em 2023 IEEE 64th Annual Symposium on Foundations of Computer Science (FOCS)}, pages 1271--1281, Los Alamitos, CA, USA, nov 2023. IEEE Computer Society.

\bibitem[MVW99]{MVW99}
Peter~Bro Miltersen, N.~V. Vinodchandran, and Osamu Watanabe.
\newblock Super-polynomial versus half-exponential circuit size in the exponential hierarchy.
\newblock In Takano Asano, Hideki Imai, D.~T. Lee, Shin-ichi Nakano, and Takeshi Tokuyama, editors, {\em Computing and Combinatorics}, pages 210--220, Berlin, Heidelberg, 1999. Springer Berlin Heidelberg.

\bibitem[NW94]{NW94}
Noam Nisan and Avi Wigderson.
\newblock Hardness vs randomness.
\newblock {\em Journal of Computer and System Sciences}, 49(2):149--167, 1994.

\bibitem[Rad21]{Radziszowski2021}
Stanisław~P. Radziszowski.
\newblock Small ramsey numbers.
\newblock {\em The Electronic Journal of Combinatorics [electronic only]}, DS01, 2021.

\bibitem[RSW22]{RSW22}
Hanlin Ren, Rahul Santhanam, and Zhikun Wang.
\newblock On the range avoidance problem for circuits.
\newblock In {\em 2022 IEEE 63rd Annual Symposium on Foundations of Computer Science (FOCS)}, pages 640--650, Los Alamitos, CA, USA, nov 2022. IEEE Computer Society.

\bibitem[Sha49]{Shannon}
Claude.~E. Shannon.
\newblock The synthesis of two-terminal switching circuits.
\newblock {\em The Bell System Technical Journal}, 28(1):59--98, 1949.

\bibitem[VW23]{VW23}
Nikhil Vyas and Ryan Williams.
\newblock {On Oracles and Algorithmic Methods for Proving Lower Bounds}.
\newblock In Yael Tauman~Kalai, editor, {\em 14th Innovations in Theoretical Computer Science Conference (ITCS 2023)}, volume 251 of {\em Leibniz International Proceedings in Informatics (LIPIcs)}, pages 99:1--99:26, Dagstuhl, Germany, 2023. Schloss Dagstuhl -- Leibniz-Zentrum f{\"u}r Informatik.

\end{thebibliography}

\end{document}